\documentclass[a4paper,10pt]{article}

\usepackage[utf8]{inputenc}
\usepackage[T1]{fontenc}
\usepackage[english]{babel}
\usepackage{fullpage}
\usepackage{microtype}
\usepackage{xspace}
\usepackage{hyperref}
\usepackage{amsmath,amssymb,amsfonts,amsthm}
\usepackage{csquotes}

\usepackage{lmodern}

\usepackage{authblk}

\usepackage{xcolor}
\usepackage{afterpage}
\usepackage{longtable}
\usepackage[font=normalsize,labelfont=bf]{caption}

\clubpenalty10000
\widowpenalty10000
\displaywidowpenalty=10000

\usepackage{colortbl}
\usepackage{algorithm}
\usepackage[noend]{algpseudocode}
\algdef{SE}[DOWHILE]{Do}{doWhile}{\algorithmicdo}[1]{\algorithmicwhile\ #1}%
\newcommand\CONDITION[2]%
  {\begin{tabular}[t]{@{}l@{}l@{}}
     #1&#2
   \end{tabular}%
  }
\algdef{SE}[WHILE]{While}{EndWhile}[1]%
  {\algorithmicwhile\ \CONDITION{#1}{\ \algorithmicdo}}%
  {\algorithmicend\ \algorithmicwhile}
\algdef{SE}[FOR]{For}{EndFor}[1]%
  {\algorithmicfor\ \CONDITION{#1}{\ \algorithmicdo}}%
  {\algorithmicend\ \algorithmicfor}
\algdef{S}[FOR]{ForAll}[1]%
  {\algorithmicforall\ \CONDITION{#1}{\ \algorithmicdo}}
\algdef{SE}[REPEAT]{Repeat}{Until}{\algorithmicrepeat}[1]%
  {\algorithmicuntil\ \CONDITION{#1}{}}
\algdef{SE}[IF]{If}{EndIf}[1]%
  {\algorithmicif\ \CONDITION{#1}{\ \algorithmicthen}}%
  {\algorithmicend\ \algorithmicif}%
\algdef{C}[IF]{IF}{ElsIf}[1]%
  {\algorithmicelse\ \algorithmicif\ \CONDITION{#1}{\ \algorithmicthen}}%

\usepackage{enumitem}
\usepackage{xpatch}
\xpretocmd{\footnote}{\unskip}{}{}

\usepackage{mathtools}
\allowdisplaybreaks  

\usepackage{todonotes}

\newtheorem{definition}{Definition}[section]

\newtheorem{lemma}[definition]{Lemma}
\newtheorem{claim}[definition]{Claim}
\newtheorem{theorem}[definition]{Theorem}


\usepackage{xspace}
\newcommand{\node}{agent\xspace}
\newcommand{\nodes}{agents\xspace}
\newcommand{\emp}{empirical\xspace}

\newcommand{\consensus}{almost-consensus\xspace}
\newcommand{\Consensus}{Almost-consensus\xspace}
\newcommand{\choice}{2-Choices dynamics\xspace}
\newcommand{\advchoiced}[0]{Biased-2-Choices dynamics\xspace}
\newcommand{\advchoice}[2]{Biased-2-Choices(\text{#1, #2})\xspace}
\newcommand{\cond}{\ \middle| \ }
\newcommand{\core}{\mathcal{C}}
\newcommand{\periphery}{\mathcal{P}}

\newcommand{\vol}{\ensuremath\mathrm{vol}}

\newcommand{\gp}{\ensuremath g_p}
\newcommand{\fp}{\ensuremath f_p}

\newcommand{\PP}[2]{\Prob{#1}{#2}}
\newcommand{\Prob}[2]{\mathbf{P}_{#1} \left( #2 \right)}

\newcommand{\Ex}[1]{\mathbf{E}\left[ #1 \right]}
\newcommand{\Var}[1]{\mathbf{Var}\left[ #1 \right]}

\newcommand{\abs}[1]{\vert #1 \vert}

\newcommand{\dijcup}{\mathbin{\dot{\cup}}}

\newcommand{\poly}{ {\mathrm{poly}}}

\newcommand{\bigO}{\mathcal{O}}

\renewcommand{\leq}{\leqslant}
\renewcommand{\geq}{\geqslant}

\renewcommand{\epsilon}{\varepsilon}

\newcommand{\D}{3 - 2\sqrt{2}}


\begin{document}
\title{Phase Transition of the 2-Choices Dynamics\\on Core-Periphery Networks\thanks{Work partially carried out at Max Planck Institute for Informatics --- Saarbr\"ucken, Germany.}}

\author{Emilio Cruciani}
\author{Emanuele Natale}
\affil{%
 Inria, I3S, CNRS, UCA --- Sophia Antipolis, France
 \authorcr\texttt{emilio.cruciani@inria.fr},
 \authorcr\texttt{natale@i3s.unice.fr}
}

\author{\\Andr\'e Nusser}
\affil{%
 Max Planck Institute for Informatics --- Saarbr\"ucken, Germany
 \authorcr\texttt{anusser@mpi-inf.mpg.de}
}

\author{Giacomo Scornavacca}
\affil{%
 Università di Sassari --- Sassari, Italy
 \authorcr\texttt{giacomo.scornavacca@gmail.com}
}
\date{}

\maketitle

\begin{abstract}
 Consider the following process on a network: Each agent initially holds either
 opinion \emph{blue} or \emph{red}; then, in each round, each agent looks at two random
 neighbors and, if the two have the same opinion, the agent adopts it.
 This process is known as the \emph{2-Choices} dynamics and is arguably the
 most basic non-trivial \emph{opinion dynamics} modeling voting behavior on
 social networks.
 Despite its apparent simplicity, 2-Choices has been analytically
 characterized only on restricted network classes---under
 assumptions on the initial configuration that establish it as a fast
 \emph{majority consensus} protocol.

 In this work, we aim at contributing to the understanding of the 2-Choices dynamics by considering
 its behavior on a class of networks with core-periphery structure, a
 well-known topological assumption in social networks.
 In a nutshell, assume that a densely-connected subset of agents, the \emph{core}, holds
 a different opinion from the rest of the network, the
 \emph{periphery}. Then, depending on the strength of the cut between the
 core and the periphery, a phase-transition phenomenon occurs: Either the core's opinion
 rapidly spreads among the rest of the network, or a
 \emph{metastability} phase takes place, in which both opinions coexist in
 the network for superpolynomial time.
 The interest of our result is twofold. On the one hand, by looking at the
 2-Choices dynamics as a simplistic model of competition among opinions in
 social networks, our theorem sheds light on the \emph{influence} of the
 core on the rest of the network, as a function of the core's connectivity
 toward the latter. On the other hand, we provide one of
 the first analytical results which shows a heterogeneous behavior of a simple 
 dynamics as a function of structural parameters of the network.
 Finally, we validate our theoretical predictions with extensive experiments on real
 networks.
\end{abstract}

\newpage

\section{Introduction}

\emph{Opinion dynamics} (for short, dynamics) are simplistic mathematical
models for the competition of agents' opinions on social networks%
~\cite{mossel_majority_2014}. 
In a nutshell, given a network where each agent initially supports an opinion (from a finite set), a dynamics is a simple rule which \nodes\
apply to update their opinion based on that of their neighbors. 
Proving theoretical results on dynamics is a challenging mathematical
endeavor which may require the development of new analytical techniques%
~\cite{benjamini_convergence_2016,becchetti_plurality_2015}.

As dynamics are aimed at modeling the spread of opinions, a central issue is
to understand under which conditions the network reaches a \emph{consensus}, i.e.,
a state where the whole network is taken over by
a single opinion~\cite{mossel_opinion_2017}.
In this respect, most efforts have been directed toward obtaining
\emph{topology-independent} results, which disregard the initial opinions'
placement on the network%
~\cite{becchetti_simple_2014,becchetti_stabilizing_2016,cooper_fast_2016}. 

The \emph{trivial} example of dynamics is the so-called
\emph{Voter Model}, in which in each round each agent copies the opinion of a
random neighbor. This classical model arises in physics and computer science and, despite its apparent simplicity,  
some properties have been proven only recently~\cite{liggett_stochastic_2013,hassin_distributed_2001,berenbrink_bounds_2016}.
The simplest non-trivial example is then arguably the \emph{2-Choices} dynamics, 
in which \nodes\ choose two random neighbors and switch to their
opinion only if they coincide~\cite{cooper_power_2014} (see Definition%
~\ref{def:2choice}).
Still, the analysis of the 2-Choices dynamics has been limited to networks with
good expansion properties, and the theoretical guarantees provided so far are
essentially independent from the positioning of initial opinions~\cite{cooper_fast_2015}.

In this work, we aim at contributing to the general understanding of the
evolution of simple opinion dynamics in richer classes of network topologies
by studying their behavior theoretically and
empirically on core-periphery networks. 
Core-periphery networks are typical economic and social networks which exhibit
a core-periphery structure, a well-known concept in the analysis of social networks in sociology and economics~\cite{borgatti_models_2000, easley_networks_2010}, which defines a bipartition of the \nodes\ into \emph{core} and \emph{periphery}, 
such that certain key features are identified. 

We consider an axiomatic framework that has been introduced in previous work in computer science~\cite{avin_core-periphery_2014}, which is based on two parameters only, \emph{dominance} and \emph{robustness}. The ranges for these parameters in the theorems we obtain include the values used in the the experimental part of this work, in which our results are validated on important datasets of real-world networks. 

Intuitively, the core is a set of \nodes\ that \emph{dominates} the rest of the
network. In order to do so, it maintains a large amount of external influence
on the periphery, higher than or at least comparable to the internal influence
that the periphery has on itself. Similarly, to maintain its \emph{robustness},
namely to hold its position and stick to its opinions, the core must be able to
resist the ``outside'' pressure in the form of external influence. To achieve
that, the core must maintain a higher (or at least not significantly lower)
influence on itself than the external influence exerted on it by the periphery. 
Both, high dominance and high robustness, are
essential for the core to be able to maintain its dominating status in the
network. 
Moreover, it seems natural for the core size to be as small as possible. 
In social-network terms this is motivated by the idea that if membership in a
social elite entails benefits, then keeping the elite as small as possible
increases the share for each of its members.

The above requirements are formalized in the following axioms~\cite{avin_core-periphery_2014}. 
Given a network $G=(V,E)$ and two subsets of \nodes\ $A,B\subseteq V$, let $e(A,B) :=
\{(u,v) \,|\, u\in A, v\in B,(u,v) \in E\}$ be the \emph{set of cut edges} between $A$ and $B$.
The \emph{density} of a set $X \subseteq V$ is defined as $\delta (X)=\abs{e(X, X)}/|X|$.
Let $c_d$ and $c_r$ be two positive constants
and let $V = \core \dijcup \periphery$, where $\core$ is the set of \nodes\ in the core and $\periphery$ the set of agents in the periphery. 
Then, the axioms are as follows:
\begin{itemize}
    \item \emph{Dominance}: The core’s influence dominates the periphery, i.e.,
        $\abs{e(\core, \periphery)} \geq c_d \cdot \abs{e(\periphery, \periphery)}$.
    \item \emph{Robustness}: The core can withstand outside influence from the
        periphery, i.e., $\abs{e(\core, \core)} \geq c_r \cdot
		\abs{e(\periphery, \core)}$.
    \item \emph{Compactness}: The core is a minimal set satisfying the above dominance and robustness
        axioms.\footnote{The core is a minimal set and not necessarily the minimum one.}
    \item \emph{Density}: The core is denser than the whole network, i.e.,
        $\delta (\core) > \delta (V)$.
\end{itemize}
Our analytical and experimental results leverage the dominance and
robustness axioms only (see Definition~\ref{def:regcore}), showing how assumptions
on the values of $c_d$ and $c_r$ are sufficient to provide a good characterization
of the behavior of the dynamics. 

We consider the \choice\ in core-periphery
networks when starting from \emph{natural} initial configurations in which the
core and the periphery have different opinions. 
Our experiments on real-world networks show that the execution of the \choice\
tends to fall mainly within two opposite kinds of possible behavior: 
\begin{itemize}
	\item \emph{Consensus:} The opinion of the core spreads in the periphery and takes over the network in a short time.
	\item \emph{Metastability:} The periphery \emph{resists} and, although the opinion of the core
        may continuously ``infect'' \nodes\ in the periphery, most of them
        retain the initial opinion. 
\end{itemize}
By comprehending the underlying principles which govern the aforementioned
phenomena, we aim at a twofold contribution:
\begin{itemize}
    \item We seek for the first results on
        basic non-trivial opinion dynamics, such as 2-Choices, in order to
        characterize its behavior:
		(i) on new classes of topologies other than networks with strong
		expansion;
		(ii) as a function of the process' initial configuration.

	\item We look for a \emph{dynamic} explanation for the
		axioms of core-periphery networks: By investigating the interplay
        between the core-periphery axioms and the evolution of simple 
        opinion dynamics, we want to get insights on \emph{dynamical
        properties} which are implicitly responsible for causing social and
        economic \nodes\ to form networks with a core-periphery structure. 
\end{itemize}

\subsection*{Original Contribution}

\newcommand{\coreind}{c}
\newcommand{\periind}{p}
In order to understand what network key properties are responsible for the
aforementioned dichotomy between a long \emph{metastable} and a fast
\emph{consensus} behavior, we theoretically investigate a
class of networks belonging to the core-periphery model.

To further simplify the theoretical analysis, in Theorem~\ref{thm:noisyd} we
initially consider the setting in which \emph{\nodes\ in the core are stubborn},
i.e., they don't change their initial opinion.
We later show, in Theorem~\ref{thm:phase_transition_cp}, how to substitute this assumption with
milder hypotheses on the core's structure. 
We remark that the evolution of the 2-Choices dynamics, together
with the latter assumption on the stubbornness of the core, can be regarded as
a SIS-like epidemic model~\cite{hethcote_mathematics_2000,
newman_spread_2002}. In such a model, the network is the subgraph induced by the
periphery and the infection probability is given by the 2-Choices dynamics,
which also determines a certain probability of \emph{spontaneous infection}
(that in the original process corresponds to the fact that \nodes\ in the
periphery interact with \nodes\ in the core). This interpretation of our
results may be of independent interest.

The common difficulty in analyzing opinion dynamics is the lack of general
tools which allow to rigorously handle their intrinsic nonlinearity and
stochastic dependencies~\cite{mossel_majority_2014, cooper_power_2014,
becchetti_stabilizing_2016, cooper_fast_2016, mossel_opinion_2017}. Hence, the
difficulty usually resides in identifying some crucial key quantities for which
ad-hoc analytical bounds on the expected evolution are derived. Our approach is
yet another instance of such efforts:
In Section~\ref{sec:theo} we provide a careful bound on the expected
change of the number of agents supporting a given opinion. Together with the
use of Chernoff bounds, we obtain a concentration of probability around the
expected evolution.
Rather surprisingly, our analysis on the concentrated probabilistic behavior
turns out to identify a \emph{phase transition} phenomenon: 
\begin{quote}
    There exists a universal constant $c^{\star} = \frac{\sqrt{2} - 1}{2}$ such that, on
	any core-periphery network of $n$ \nodes{}, if the dominance parameter $c_d$ is greater than
	$c^{\star}$, an \emph{\consensus}\ is rapidly reached,
	with high probability;
    \footnote{We further emphasize that our analysis is not only
        \emph{mean-field}. In addition to describing how the process evolves
        in expectation, we show that the process
        does not deviate significantly from how it behaves in expectation
        \emph{with high probability} (w.h.p.), i.e., with probability at least $1-\mathcal{O}(n^{-c})$
        for some constant $c>0$.}
	otherwise, if $c_d$ is
	less than $c^{\star}$, a \emph{metastable} phase in which the periphery retains
    its opinion takes place, namely, although the opinion
    of the core may continuously ``infect'' \nodes\ in the periphery, most of
    them will retain the initial opinion for superpolinomially-many rounds, with high
    probability.\footnote{Our technique does not allow a direct analysis of the case in which $c_d$ is
	exactly equal to $c^{\star}$; in that respect, we provide an  analysis in the simplified setting of Theorem \ref{thm:noisyd}.}
\end{quote}

We validate our theoretical predictions by extensive experiments on real-world
networks chosen from a variety of domains including social networks,
communication networks, road networks, and the web. We thoroughly discuss 
our results in Section~\ref{sec:experiments}, however, we briefly want to
highlight the key results of our experiments in the following.
The experiments showed some weaknesses of the core extraction heuristic used
in~\cite{avin_core-periphery_2014}. To avoid those drawbacks, we designed a new
core extraction heuristic which repeatedly calculates densest-subgraph approximations.
Our experimental results on real-world networks show a strong correlation with the theoretical
predictions made by our model. They further suggests an empirical threshold 
larger than $ c^{\star}$ for which the aforementioned correlation is even
stronger. 
We discuss which aspects of the current theoretical model may be responsible for
such discrepancy, and thus identify possibilities for a model which is even more accurate.

We remark that our investigation represents an original contribution with
respect to the line of research on \emph{consensus} discussed in Section%
~\ref{sec:related}, as it shows a drastic change in behavior for the 2-Choices
dynamics on an arguably \emph{typical} broad family of social networks which is
not directly characterized by expansion properties. 
In particular, the convergence to the core's opinion in our theoretical and
experimental results is a highly nontrivial fact when compared to previous
analytical works (see Section \ref{sec:experiments} for more details).

Overall, our theoretical and experimental results highlight new potential
relations between the typical core-periphery structure in social and economic networks
and the behavior of simple opinion dynamics---both, in terms of getting insights
into the driving forces that may determine certain structures to appear frequently
in real-world networks, as well as in terms of the possibility to provide
analytical predictions on the outcome of simplistic models of interaction in
networks of agents.

\subsection*{Roadmap}
In Section \ref{sec:related} we present related work on opinion dynamics and core-periphery networks.
In Section \ref{sec:theo} we analyze the \advchoiced{}, proving a phase transition on general dense graphs. We complete our results by analyzing the behaviour of the dynamics on the threshold of the phase transition.
In Section \ref{sec:theo-cp} we show how to translate the results of Theorem~\ref{thm:noisyd} in a formal analysis of the \choice\ on a family of Core-Periphery networks. 
In Section \ref{sec:experiments} we propose a new heuristic for extracting the Core from real networks and we present simulations of the \choice\ when the starting color configuration reflects our Core-Periphery partitioning.

\section{Related Work}
\label{sec:related}

Simple models of interaction between pairs of nodes in a network are 
studied since the first half of the 20th century in statistical mechanics
where mathematical models of ferromagnetism led to the study of Ising and Potts
models~\cite{liggett_interacting_2012}. 
A different perspective later came from diverse sciences such as economics and
sociology where averaging-based opinion dynamics such as the DeGroot model
were investigated~\cite{french_formal_1956, harary_criterion_1959,
degroot_reaching_1974, golub_naive_2010, jackson_social_2010}.

More recently, computer scientists have started to contribute to the
investigating of opinion dynamics for mainly two reasons. First, with the
advent of the Internet, huge amounts of of data from social
networks are now available.
As the law of large numbers often allows to assume crude simplifications on the
agents' behavior in such networks~\cite{easley_networks_2010},
investigating opinion dynamics allows for a more fine-grained understanding of the
evolution of such systems.
Second and somehow complementary to the previous motivation, technological systems of
computationally simple agents, such as mobile sensor networks, often require the
design of computationally primitive protocols. These protocols end up being
surprisingly similar, if not identical, to many opinion dynamics which emerge
from a simplistic mathematical modeling of agents' behavior in fields such as
sociology, biology, and economy%
~\cite{hassin_distributed_2001,becchetti_plurality_2015,mossel_opinion_2017}. 

A substantial line of work has recently been devoted to investigating the use
of simple opinion dynamics for solving the plurality consensus problem in
distributed computing. 
The goal in this problem setting is for each node to be aware of the most frequent 
initially supported opinion after a certain time~\cite{
doerr_stabilizing_2011, becchetti_simple_2014,
becchetti_plurality_2015, becchetti_stabilizing_2016,
ghaffari_polylogarithmic_2016, george_giakkoupis_efficient_2016,
colin_cooper_linear_2016, cooper_discordant_2016,
cooper_coalescing-branching_2016}.
The seminal work by Hassin and Peleg~\cite{hassin_distributed_2001}
introduced for the first time the study of a synchronous-time version of the
Voter Model in statistical mechanics. In this model, in each discrete-time
round, each node looks at a random neighbor and copies her opinion. The Voter
Model is considered the \emph{trivial} opinion dynamics, in the sense that it
is arguably the simplest conceivable rule by which nodes may meaningfully
update their opinion as a function of their neighbors' opinion. Many properties of
this process are understood by known mathematical techniques such as an elegant
duality with the coalescing random walk process~\cite{aldous_reversible_2002}.
In particular, it is known that the Voter Model is not a \emph{fast} dynamics
as for the time it takes before consensus on one opinion is reached in the
network. 
For that reason, the 2-Choices dynamics has been
considered~\cite{cooper_power_2014}. In such dynamics, in each round, each node
looks at the opinion of \emph{two} random neighbors and, if these two are the
same, adopts it. This process can arguably be considered the \emph{simplest} non-trivial type
of opinion dynamics.

The authors of~\cite{cooper_power_2014} consider any initial configuration in
which each node is supporting one out of two possible opinions. They proved that in such a configuration,
under the assumption that the initial bias (i.e., advantage of an opinion) is
greater than a function of the network's \emph{expansion} (measured in terms of
the second eigenvalue of the network)~\cite{hoory_expander_2006}, the whole
network will support the initially most frequent opinion with high probability
after a polylogarithmic number of rounds. 
The results of~\cite{cooper_power_2014} on the 2-Choices dynamics were later
refined with milder assumptions on the initial bias with respect to the network's
expansion~\cite{cooper_fast_2015} and generalized to more opinions~\cite{cooper_fast_2016}. 
Their techniques should be easily adaptable in order to handle
similar dynamics, such as the 3-Majority dynamics~\cite{becchetti_simple_2014,becchetti_stabilizing_2016, ghaffari_tight_2017}. 
The 2-Choices dynamics constitutes one of the few processes whose
behavior has been characterized on non-complete topologies, typically assuming good expansion properties.
Few works analyze the non-trivial behavior of such a dynamics on other classes of topologies which present a clustered structure, e.g., on regular clustered graphs~\cite{CrucianiNS19} and on graphs sampled from the Stochastic Block Models~\cite{shimizu19phase}, proving non-trivial behaviors of the dynamics. In particular, \cite{shimizu19phase} leverages the Kim-Vu concentration of probability inequality and the theory of competitive dynamical systems to obtain a precise description of the phase transition behavior of the 2-Choices dynamics on the aforementioned model.

On a different note, for the deterministic Majority dynamics where in each
round each node updates her opinion with the most frequent one among her
neighbors: Substantial effort has been devoted to investigating how small the
cardinality of a
\emph{monopoly} can be, i.e., a set of nodes supporting a given opinion $\mathcal{O}$ such
that, when running the Majority dynamics, eventually the
network reaches consensus on $\mathcal{O}$~\cite{peleg_size_1998, peleg_local_2002}. 
The previous line of investigation has focused on determining the existence of
network classes on which the monopoly has a size which is upper bounded by a small
function. This question has been settled by~\cite{berger_dynamic_2001}, 
who proved the existence of a family of networks with constant-size monopoly. 
We emphasize that this line of investigation is peculiar as it deals with
existential questions related to specific network classes, as opposed to the
typical research questions that we
discussed so far, which ask for general characterizations of the behavior of
the considered process.

Recently, a more systematic and general study of opinion dynamics has been
carried out in~\cite{mossel_majority_2014, benjamini_convergence_2016,
mossel_opinion_2017}. It characterizes the evolution and
other mathematical aspects of dynamics---such as the Majority dynamics,
the Voter Model, the DeGroot model, and others---on different network classes, such
as Erd\H{o}s-R\'enyi random networks and expander networks. We follow~\cite{mossel_opinion_2017} in adopting the term ``opinion dynamics'' to refer
to the class of processes discussed above. 

A similar perspective to ours has been adopted also in other works.
In~\cite{lambiotte2007majority} the authors show a phase-transition in a mean-field analysis of the behavior of deterministic majority in an asynchronous-update model, in which only few nodes update their opinion at each round, on ``coupled
fully connected networks'' composed of overlapping complete graphs. 
Voter Model and deterministic majority dynamics are also analyzed in~\cite{ijcai2020}, where it is considered a setting in which a superior opinion exists; using tools from the birth-and-death Markov Chains theory, it is shown the interplay between network topology and opinion dynamics in the time needed for the process to reach a consensus on the superior opinion.
In~\cite{icdcn2021} a similar analysis as ours has been carried out for the synchronous $k$-majority dynamics, in which nodes update their opinion to the majority opinion of $k$ randomly selected neighbors, proving similar phase transitions;
in the same work results for the synchronous Voter Model and Majority dynamics are also presented.

\subsection*{Core-Periphery model}
It has long been observed in sociology that many economic and social networks
exhibit a \emph{core-periphery structure}~\cite{snyder_structural_1979,
borgatti_models_2000, rombach_core-periphery_2014,
zhang_identification_2015}, namely, it is generally possible to group nodes into
two classes, a \emph{core} and a \emph{periphery}, such that the former exhibits
a dense
internal topology while the latter is sparse and
loosely connected, with specific properties relating the two.
Such architectural principle has been linked, for example, to the easiness with
which individuals solve routing problems in networks subject to the
small-world phenomenon~\cite{easley_networks_2010}.

The qualitative notion of core-periphery structure was translated into
quantitative relations in the axiomatic approach of%
~\cite{avin_core-periphery_2014}, which later also applied the
algorithmic properties that follow from the core-periphery structure to the
design of efficient distributed networks \cite{avin_distributed_2014}.
In some sense, our theoretical and empirical results may be regarded as a
\emph{functional} justification for the presence of a core-periphery structure
in networks, as the latter turns out to play a decisive role in determining
a certain kind of evolution for basic opinion dynamics such as the 2-Choices.

\section{Analysis of the \advchoiced}
\label{sec:theo}

We now present the \choice{} and an alternative process, based on it, 
that will be analyzed in this section.
The results on such an alternative process will be exploited later, in Section~\ref{sec:core-periphery}, to describe the behavior of the \choice{} on core-periphery networks.
We are going to use use colors, instead of opinions, 
to give an intuitive explanation of the processes 
and facilitate its understanding.

\begin{definition}[\choice]
    \label{def:2choice}
    Given a network $G=(V,E)$ with an initial binary coloring of the \nodes 
    $c: V \mapsto \{red, blue\}$,
    the \choice\ proceeds in synchronous rounds:
    In each round, 
    each \node $u$ chooses two neighbors $v, w$ uniformly at random with replacement; 
    if $c(v)=c(w)$, then $u$ updates its own color to $c(v)$, 
    otherwise $u$ keeps its color.
\end{definition}


\begin{definition}[\advchoice{$p$}{$\sigma$} dynamics]
Let $p \in [0,1]$ be a constant and let $\sigma \in \{red, blue\}$ be a color.
We define the \advchoice{$p$}{$\sigma$} dynamics as a variation of the \choice: 
Every time an \node picks a neighbor, 
with probability $p$ that neighbor supports color $\sigma$ 
regardless of its actual color.
\end{definition}

We start with some notation and preliminaries. 
In the following, we will make use of the asymptotic notation in order to bound the limiting behavior of some of the quantities analyzed in the paper.
In particular we will use $\bigO$, $o$, $\Theta$, $\Omega$, and $\omega$.
This means that our results hold for networks with sufficiently many \nodes.

Let $G=(V,E)$ be a network.
For a set of \nodes $A \subseteq V$, let the \emph{volume} of $A$ be $\vol(A) :=
\sum_{v \in A} d_v$, where $d_v$ is the degree of $v$.
We call $d_{_{\min}} = \min_{v \in V} d_v$.
In this section we only consider dense graphs, i.e., graphs of $n$ \nodes where $d_{_{\min}} = \omega(\log n)$.
Let $\mathbf{c}^{(t)} \in \{red,\, blue\}^n$ 
be the configuration of the colors of the \nodes at time $t$. 
Let the \nodes of $G$ run the \advchoice{p}{blue} dynamics starting from an initial configuration $\mathbf{c}^{(0)}$ where all nodes support color \emph{red}.
Let $B^{(t)}$ be the set of \emph{blue} \nodes
and $R^{(t)} = V \setminus B^{(t)}$ be the set of \emph{red} \nodes at time $t$.
For any \node $v$, let $N_R(v) = N(v) \cap R^{(t)}$ be the set of \emph{red} neighbors and $N_B(v) = N(v) \cap B^{(t)}$ the set of \emph{blue} neighbors of $v$. Furthermore, 
let $r^{(t)}_v$ and $b^{(t)}_v$ respectively be the number of \emph{red} and \emph{blue} neighbors of $v$ at time $t$, i.e., $r^{(t)}_v = |N_R(v)|$ and $b^{(t)}_v = |N_B(v)|$.
Let $\phi^{(t)}_v = \frac{r^{(t)}_v}{d_v}$ be the fraction of \emph{red} \nodes 
in the neighborhood of $v$;
let $\phi_{_{\min}}^{(t)} = \min_{v \in V} \phi^{(t)}_v$
and $\phi_{_{\max}}^{(t)} = \max_{v \in V} \phi^{(t)}_v$ 
be, respectively, the minimum and maximum fractions of \emph{red} neighbors
among all \nodes in $V$.
In the following, for the sake of readability, whenever we omit the time index, 
we refer to the value at time $t$, 
e.g., $\phi_v$ stands for $\phi^{(t)}_v$. 
Similarly, we concisely denote with 
$\PP{R}{v} = \mathbf{P}_{ } ( v \in R^{(t+1)} \,|\, \mathbf{c}^{(t)} = \bar{\mathbf{c}})$
the probability that \node $v$ will be supporting the
\emph{red} color in the next round of the \advchoice{$p$}{\emph{blue}} (conditioned to the event that the configuration at time $t$ is equal to $\bar{\mathbf{c}}$), i.e.,
\[
    \PP{R}{v} = \left\{
    \begin{array}{ll}
    1 - \big( p+(1-p)(1 - \phi_v) \big)^2   & \text{if } v \in R,\\
    (1-p)^2\phi_v^2                         & \text{if } v \in B.
    \end{array}
    \right.
\]
Furthermore, note that:
\begin{align*}
    \min_{w\in R} \PP{R}{w} &= 1 - \big( p+(1-p)(1 - \phi_{\min}) \big)^2,\\
    \min_{w\in B} \PP{R}{w} &= (1-p)^2\phi_{\min}^2.
\end{align*}

Conditioning to the configuration $\mathbf{c}^{(t)} = \bar{\mathbf{c}}$,
we can give a lower bound for the expected fraction of \emph{red} neighbors 
of any \node $v$ as follows:
\begin{align}
    &\Ex{\phi^{(t+1)}_v \cond \mathbf{c}^{(t)} = \bar{\mathbf{c}}} 
    = \frac{1}{d_v} \left(
    \sum_{w \in N_R(v)} \PP{R}{w} + \sum_{w \in N_B(v)} \PP{R}{w} \right) 
    \notag\\
    & \geq 
    \frac{1}{d_v} \left(
    \abs{N_R(v)} \min_{w\in R} \PP{R}{w} 
    + \abs{N_B(v)} \min_{w\in B} \PP{R}{w} \right) 
    \notag\\
    & =
    \frac{r_v}{d_v} \min_{w\in R} \PP{R}{w}  
    + \left(1 - \frac{r_v}{d_v}\right) \min_{w\in B} \PP{R}{w} 
    \notag\\
    & =
    \frac{r_v}{d_v} \left(1 - \big( p+(1-p)(1 - \phi_{_{\min}}) \big)^2 \right)
    + \left(1 - \frac{r_v}{d_v}\right) (1-p)^2\phi_{_{\min}}^2
    \notag\\
    & =
    \frac{r_v}{d_v} \left( 
    1 - \big( p+(1-p)(1 - \phi_{_{\min}}) \big)^2 - (1-p)^2\phi_{_{\min}}^2
    \right)
    + (1-p)^2\phi_{_{\min}}^2
    \notag\\
    & \geq
	\phi_{_{\min}} \left( 
    1 - \big( p+(1-p)(1 - \phi_{_{\min}}) \big)^2 - (1-p)^2\phi_{_{\min}}^2
    \right)
    + (1-p)^2\phi_{_{\min}}^2
    \notag\\
    & =
	\phi_{_{\min}}\left(1 - \big( p+(1-p)(1 - \phi_{_{\min}}) \big)^2 
    + (1-p)^2 (1 - \phi_{_{\min}})\phi_{_{\min}}\right)
    \notag\\
    & =
	\phi_{_{\min}}\left( 1 - 2(1-p)^2 \phi_{_{\min}}^2+(1-p)(3-p) \phi_{_{\min}}-1 \right).
    \notag
\end{align}
Note that we could cancel out $1$ and $-1$, however, leaving them facilitates the analysis. In the steps above, we can lower bound $\frac{r_v}{d_v}$
because its coefficient, i.e.,
$(1 - \big( p+(1-p)(1 - \phi_{_{\min}}) \big)^2 - (1-p)^2\phi_{_{\min}}^2)$,
is non-negative for any $p, \phi_{_{\min}} \in [0,1]$.

Conversely, we can upper bound the expectation using $\phi_{_{\max}}$, i.e.,
\begin{equation}\label{eq:expected_red_fraction}
\Ex{\phi^{(t+1)}_v \cond \mathbf{c}^{(t)} = \bar{\mathbf{c}}} \leq 
\phi_{_{\max}} 
\left( 1 - 2\left(1-p\right)^2 \phi_{_{\max}}^2+(1-p)(3-p) \phi_{_{\max}}-1 \right).
\end{equation}

Note that the lower and the upper bound for the expectation
have the same form. 
In fact, defining the functions
\begin{gather*}
\fp(\phi) := 2\left(1-p\right)^2\phi^2-(1-p)(3-p)\phi+1, \\
\gp(\phi) := \phi ( 1 - \fp(\phi) ), 
\end{gather*}
the lower and the upper bound for the expectation can respectively be written 
as $\gp(\phi_{_{\min}})$ and $\gp(\phi_{_{\max}})$.
Thus, analyzing $\fp(\phi)$, we can see for which values of $p$ 
the function $\gp(\phi)$ is increasing or decreasing.

Before analyzing the actual behavior of the \advchoice{$p$}{\emph{blue}} on $G$, 
we study $\fp(\phi)$ in order to characterize the bounds for the expectation.
The roots of $\fp(\phi)$ are in $\frac{3-p\pm\sqrt{p^2-6p+1}}{4(1-p)}$
while its derivative is
\( 
\fp'(\phi) = 4\left(1-p\right)^2\phi - (1-p)(3-p). 
\)
It follows that $\fp(\phi)$ has a minimum point in 
$\bar{\phi} = \frac{3-p}{4(1-p)}$.
Moreover, the sign of $\fp(\bar{\phi})$ exclusively depends on $p$.
In fact
\begin{align}
\fp(\bar{\phi}) > 0  & \qquad \text{if } p > \D,
\label{eq:positive}\\
\fp(\bar{\phi}) < 0  & \qquad \text{if } p < \D,
\label{eq:negative}
\end{align}
since $\fp(\phi)$ is convex 
and in~\eqref{eq:positive} the discriminant of $\fp(\phi)$ is negative,
while in~\eqref{eq:negative} the discriminant is positive.

\bigskip
Before stating and proving the main theorem of this work, we give the definition of \consensus{} ($\kappa(n)$-consensus), i.e., one of the configurations of interest that are reached by the dynamics.
\begin{definition}[$\kappa(n)$-consensus]
Let $\kappa(n) : \mathbb{N} \rightarrow [0,1]$.
Given a network $G=(V,E)$ and a color $\sigma$, we say a dynamics reaches a $\kappa(n)$-consensus{} in round $t$ if the \emph{volume} of \nodes supporting color $\sigma$ is $(1 - \kappa(n))\vol(V)$.
\end{definition}

We are now ready to state the main theorem, that will be proved in the following subsections.
\begin{theorem}[Phase Transition]\label{thm:noisyd}
Let $G = (V,E)$ be a network of $n$ \nodes such that each \node $v$ 
has a color $\sigma_v$ and $d_v = \omega(\log n)$ neighbors. 
Let $p \in [0,1]$ be a constant and $p^\star = \D$.
Then, starting from a configuration where all \nodes initially support the \emph{red} color
and letting the \nodes run \advchoice{$p$}{\text{blue}},
it holds that:
\begin{itemize}
    \item \emph{$\kappa(n)$-consensus:} If $p > p^\star$, then there exists a constant $\lambda$ such that for every $\kappa(n) : \mathbb{N} \rightarrow [0,1]$ with $\kappa(n) \geq \lambda \cdot \frac{\log n}{d_{_{\min}}}$,
    the dynamics reaches a $\kappa(n)$-consensus on color \emph{blue} within $\bigO(\log(1/\kappa(n)))$ rounds, w.h.p.
    
    \item \emph{Metastability:} If $p < p^\star$, 
    then the \emph{volume} of the \emph{blue} \nodes 
    never exceeds $\frac{1-3p}{4(1-p)} \vol(V)$
    for any $\poly(n)$ number of rounds, w.h.p.
    
    \item \emph{Critical value:} If $p = p^\star$ and $G$ is the complete graph, the dynamics reaches a \emph{consensus} on color \emph{blue} in round $t \in \Omega(\sqrt[4]{n/\log n}) \cap \bigO(n \log n)$, w.h.p.
\end{itemize}
\end{theorem}
Note that in the $\kappa(n)$-consensus result there is a lower bound on the function $\kappa(n) = \Omega(\frac{\log n}{d_{_{\min}}})$. Indeed, given no topological assumption on the graph other than the minimum degree, our proof techniques cannot guarantee a stronger consensus.
As a consequence, the higher the minimum degree, the closer to a complete consensus our analysis can get.


\subsection{\Consensus}
Let $p > \D$. 
Let $\fp(\bar{\phi}) = \epsilon$ be the local minimum of $\fp$. 
Note that $\epsilon$ is positive because of~\eqref{eq:positive} 
and it is a constant since it only depends on $p$ and $\bar{\phi}$, 
which are both constants. 
Due to the convexity of $\fp(\phi)$, it holds that $\fp(\phi) \geq \epsilon$
for every $\phi \in [0,1]$. 
Thus, for every $v \in V$, we have that
\[
  \Ex{\phi^{(t+1)}_v \cond \mathbf{c}^{(t)} = \bar{\mathbf{c}}} 
  \leq \gp(\phi_{_{\max}}^{(t)}) 
    = \phi_{_{\max}}^{(t)} (1 - \fp(\phi_{_{\max}}^{(t)})) 
    \leq \phi_{_{\max}}^{(t)} (1 - \epsilon).
\]
Thus, we can apply a multiplicative form of the Chernoff bounds directly to the
upper bound of the expected fraction of \emph{red} neighbors at the next round, 
as shown in~\cite[Exercise 1.1]{dubhashi_concentration_2009}, and obtain
\begin{align*}
    &\Prob{}{\phi^{(t+1)}_v > (1-\epsilon^2) \phi_{_{\max}}^{(t)} 
        \cond \mathbf{c}^{(t)} = \bar{\mathbf{c}}}
    \\
    &=\Prob{}{\phi^{(t+1)}_v > (1+\epsilon)(1-\epsilon) \phi_{_{\max}}^{(t)} 
        \cond \mathbf{c}^{(t)} = \bar{\mathbf{c}}}
    \\ 
    &= \Prob{}{r^{(t+1)}_v > (1+\epsilon)(1-\epsilon) \phi_{_{\max}}^{(t)} d_v 
        \cond \mathbf{c}^{(t)} = \bar{\mathbf{c}}}
    \\ 
    &\leq e^{-\frac{\epsilon^2}{3}(1 - \epsilon) \phi_{_{\max}}^{(t)} d_v}
    \leq e^{-\frac{\epsilon^2}{3}(1 - \epsilon) \phi_{_{\max}}^{(t)} d_{_{\min}}}
    \\
    &\leq e^{-2\log n} = n^{-2},
\end{align*} 
where the last inequality only holds for configurations
$\mathbf{c}^{(t)} = \bar{\mathbf{c}}$ such that
$\phi_{_{\max}}^{(t)} \geq \kappa(n)$, since by hypothesis we have that $\kappa(n) \geq \lambda \cdot \frac{\log n}{d_{_{\min}}}$ where we set $\lambda = \frac{6}{\varepsilon^2 (1 - \epsilon)}$.
Thus, using the union bound over all the \nodes, we get that 
$\phi_{_{\max}}^{(t+1)} \leq (1-\epsilon^2)\phi_{_{\max}}^{(t)}$, w.h.p.
Formally, for all $t$ such that $\phi_{_{\max}}^{(t)} \geq \kappa(n)$, 
\begin{align*}
\Prob{}{
\exists v \in V: \phi^{(t+1)}_v > (1-\epsilon^2)\phi_{_{\max}}^{(t)} \cond \mathbf{c}^{(t)} = \bar{\mathbf{c}}
} 
&\leq \sum_{ v \in V}\Prob{}{\phi^{(t+1)}_v > (1-\epsilon^2)\phi_{_{\max}}^{(t)} \cond \mathbf{c}^{(t)} = \bar{\mathbf{c}}}
\\
&\leq \sum_{ v \in V} n^{-2} 
= n^{-1}.  
\end{align*}
Starting from a configuration where all \nodes are in state \emph{red} and thus $\phi_{_{\max}}^{(0)} = 1$, we get that after $T$ rounds $\phi_{_{\max}}^{(T)} \leq (1-\varepsilon)^{T}$ and, considering $T=\bigO(\log(1/\kappa(n)))$, it follows that
$\phi_{_{\max}}^{(T)} \leq \kappa(n)$.
Indeed,
\begin{align*}
    \phi_{_{\max}}^{(T)} \leq (1-\varepsilon^2)^T \leq \kappa(n)
    &\iff T \log(1-\varepsilon^2) \leq \log(\kappa(n))
    \\
    &\iff T \geq \log(\kappa(n)) / \log(1-\varepsilon^2)
    \\
    &\iff T \geq \log(1/\kappa(n)) / \log(1/(1-\varepsilon^2)),
\end{align*}
which implies $T = \bigO(\log(1/\kappa(n)))$ since $\varepsilon$ is a positive constant.

Note that if the configuration $\mathbf{c}^{(t)} = \bar{\mathbf{c}}$ is such that $\phi_{_{\max}}^{(t)} < \kappa(n)$, i.e., after $\bigO(\log(1/\kappa(n)))$ rounds of the \advchoice{p}{blue} dynamics, the process has reached a $\kappa(n)$-consensus on color \emph{blue}.
Indeed,
\[
	\vol(R^{(t)}) = \sum_{v \in R} d_v 
	= \sum_{v \in V} (d_v - b_v^{(t)})
	= \sum_{v \in V} d_v \left(1-\frac{b_v^{(t)}}{d_v}\right)
	= \sum_{v \in V} d_v \phi_v^{(t)}
	\leq \phi_{_{\max}}^{(t)} \sum_{v \in V} d_v
	= \kappa(n) \vol(V).
\]
\subsection{Metastability}
Let $p < \D$.
Define $\fp(\bar{\phi}) = -\epsilon$ to be the local minimum of $f$.
Recall that $\epsilon$ is positive because of~\eqref{eq:negative} 
and it is a constant since it only depends on the constants $p$ and $\bar{\phi}$.

Then, using the fact that $\gp(\phi)$ is monotonically nondecreasing 
for $\phi \in [0,1]$,
for every $\phi \geq \bar{\phi}$ we have that
\[ \gp(\phi) \geq \gp(\bar{\phi}) = \bar{\phi}(1- \fp(\bar{\phi})) =  \bar{\phi}(1 + \epsilon). \]

We can now use a multiplicative form of the Chernoff bounds in order to show that
if the fraction of \emph{red} neighbors of an \node $v$ is at least $\bar{\phi}$,
then the probability that the number of \emph{red} neighbors of $v$ in the next round 
is lower than $\bar{\phi}$ is negligible.
Formally, let $\mathbf{c}^{(t)}$ be an arbitrary configuration such that
$\phi_{_{\min}}^{(t)} \geq \bar{\phi}$, 
e.g., the initial configuration $\mathbf{c}^{(0)}$ has this property.
First, note that due to $\gp(\phi) \geq \gp(\bar{\phi})$, we have that
\[
\Ex{\phi^{(t+1)}_v \cond \mathbf{c}^{(t)} = \bar{\mathbf{c}}} \geq \gp(\phi_{_{\min}}^{(t)}) \geq \gp(\bar{\phi}) = \bar{\phi}(1 + \epsilon).
\]
Then, it follows that
\begin{align*}
&\Prob{}{\phi^{(t+1)}_v < \bar{\phi} \cond \mathbf{c}^{(t)}} 
\\
&= \Prob{}{\phi^{(t+1)}_v < \frac{1}{1 + \varepsilon} \bar{\phi} (1 + \varepsilon) \cond \mathbf{c}^{(t)} = \bar{\mathbf{c}}} 
\\
& = \Prob{}{\phi^{(t+1)}_v < \left(1 - \frac{\varepsilon}{1 + \varepsilon}\right) \bar{\phi} (1 + \varepsilon) \cond \mathbf{c}^{(t)} = \bar{\mathbf{c}}} 
\\
& \leq \Prob{}{\phi^{(t+1)}_v < \left(1 - \frac{\varepsilon}{1 + \varepsilon}\right) \Ex{\phi^{(t+1)}_v \cond \mathbf{c}^{(t)} = \bar{\mathbf{c}}} \cond \mathbf{c}^{(t)} = \bar{\mathbf{c}}} 
\\
& = \Prob{}{r^{(t+1)}_v < \left(1 - \frac{\varepsilon}{1 + \varepsilon}\right) d_v \Ex{\phi^{(t+1)}_v \cond \mathbf{c}^{(t)} = \bar{\mathbf{c}}} \cond \mathbf{c}^{(t)} = \bar{\mathbf{c}}} 
\\
& \leq \exp\left(-\frac{\varepsilon^2}{3(1 + \varepsilon)^2} d_v \Ex{\phi^{(t+1)}_v \cond \mathbf{c}^{(t)} = \bar{\mathbf{c}}} \cond \mathbf{c}^{(t)} = \bar{\mathbf{c}} \right) 
\\
& \leq \exp\left(-\frac{\varepsilon^2}{3(1 + \varepsilon)^2} d_v \bar{\phi} (1 + \varepsilon)\right) 
\\
& = e^{-\Omega(d_v)} = e^{-\omega(\log n)} = n^{-\omega(1)}.
\end{align*} 
Applying the union bound over all the \nodes, we get
\begin{align*}
\Prob{}{
	\exists t \in \poly(n),
	\exists v \in V
	: \phi^{(t+1)}_v < \bar{\phi} \cond \mathbf{c}^{(t)} = \bar{\mathbf{c}}
}
&\leq  \sum_{\substack{t \in \poly(n)\\ v \in V}} 
	\Prob{}{\phi^{(t+1)}_v < \bar{\phi} \cond \mathbf{c}^{(t)} = \bar{\mathbf{c}}} 
\\
&= 
\sum_{\substack{t \in \poly(n)\\ v \in V}} n^{-\omega(1)}
= n^{-\omega(1)}.
\end{align*}
Thus, with high probability we have that $\phi_{_{\min}}^{(t)} \geq \bar{\phi}$ for every round polynomial number of rounds. Before we can use this to finish the proof, note that
$\sum_{v \in B^{(t)}} d_v = \sum_{v \in V} (d_v - r_v^{(t)})$, by simply counting the number of \emph{blue} endpoints of an edge in two different ways.
Using that $\phi_v^{(t)} \geq \phi_{_{\min}}^{(t)} \geq \bar{\phi}$ for each $v$, we have
\[
	\vol(B^{(t)}) = \sum_{v \in B} d_v 
	= \sum_{v \in V} (d_v - r_v^{(t)})
		= \sum_{v \in V} d_v \left(1-\frac{r_v^{(t)}}{d_v}\right)
	\leq (1-\bar{\phi}) \sum_{v \in V} d_v
	 = \frac{1-3p}{4(1-p)} \vol(V).
\]
This means, the volume of the \emph{blue} \nodes never exceeds a fraction of
$\frac{1-3p}{4(1-p)}$ of the total volume of the graph, w.h.p.

\subsection{Critical value}

In this section we prove the critical-value part of Theorem \ref{thm:noisyd}.
The analysis of the behavior of the dynamics becomes harder if $p = \D$. Indeed in this case the expected evolution of the dynamics cannot describe accurately its real evolution. 
For this reason we consider the simplest case where the underlying graph $G$ is a clique (assuming it also has self-loops, to further simplify the analysis).
In this scenario we prove both an upper and a lower bound on the converge time of the dynamics to a consensus.%
\footnote{Differently from the previous subsections describing the behavior of the dynamics in the subcritical ($p < p^\star$) and supercritical ($p > p^\star$) regimes, in this setting we are able to talk about full consensus, i.e.\ where all \nodes agree on a same opinion, due to the specific graph topology.}

In the upper bound the main idea is to use the variance of the process to show that, in near-linear time, the color configuration is unbalanced enough towards color \emph{blue} so that the process quickly reaches the blue monochromatic configuration. 
The main part of the lower bound, instead, is where we show that there is a time-frame where the drift toward the blue configuration gets smaller and smaller, and the bound then follows by applying known results on submartigales. 
In both analyses, many events happen with constant probability, so it is not possible to apply the same technique used in the previous sections where $p \neq p^\star$, i.e.\ giving bounds neighbourhood by neighbourhood and ``merge'' them through the Union Bound. 

Even if we restrict the analysis to the case where $G$ is the complete graph, we observe that in this context the convergence time of the dynamics is quite different from both the previous cases, being closer to the convergence time of Voter~\cite{becchetti2020overview}.


\paragraph*{Upper bound.}
We first show an upper bound to the time that the process needs to reach a configuration in which the number of blue nodes is big enough, e.g., 3/4 of the total, exploiting the variance of the process.
Then, starting from a configuration where 3/4 of the nodes are blue, we can rely on previous results on the unbiased version of the \choice{}~\cite{cooper_fast_2015}, which is stochastically dominated by the \advchoice{p}{blue} (having no help of the bias $p$ in reaching a blue consensus and, thus, being slower).
\begin{enumerate}
\item Let $B^{(t)} = n (1 - \phi^{(t)})$ be the random variable counting the number of blue nodes at time $t$ and let $\tau := \inf\{t \in \mathbb{N} : B^{(t)} \geq \frac{3}{4}n\}$ be a stopping time. 
We first show in Lemma~\ref{lemma:a botte de varianza} that starting from any configuration where $B^{(0)} < \frac{3}{4}n$, it holds that $\Ex{\tau} = \bigO(n)$.
As a simple application of the Markov inequality we get that $\Prob{}{\tau > 2\Ex{\tau}} \leq \frac{\Ex{\tau}}{2\Ex{\tau}} \leq \frac{1}{2}$, using Lemma~\ref{lemma:a botte de varianza}. As an application of the Union Bound we get that, w.h.p., the hitting condition is satisfied at least once within $\log n$ attempts. This shows that the stopping time $\tau = \bigO(n \log n)$ w.h.p., namely the process reaches a configuration such that the number of blue nodes is at least $\frac{3}{4}n$ within $\bigO(n \log n)$, w.h.p.

\item Starting from such a set of configurations we apply the results on the \choice{}~\cite[Theorem 1]{cooper_fast_2015}, where the authors show that the dynamics converges to the all-blue configuration in $\bigO(\log n)$ rounds, w.h.p., if the initial unbalance toward the blue color is big enough with respect to some function of the spectral expansion of the graph.
\end{enumerate}

\bigskip
In detail, we first apply our previous equations on the complete graph $G$. 
Since $G$ is the complete graph with self-loops we have that the set of neighbors for each
$v \in V$ is $N_v=V$. Thus, for each $v,w \in V$ the \emph{number} of red neighbors is equal, i.e. $r_v = r_w$, and therefore also the \emph{ratio} of red neighbors is equal, i.e. $\phi_v = \phi_w$. 
For simplicity, we will just look at the number of red neighbors $r^{(t)}$
and at the fraction of red neighbors 
\[
    \phi^{(t)} := \frac{r^{(t)}}{n}.
\]
The topology of $G$ also implies that a configuration 
$\mathbf{c}^{(t)}$ at time $t$ is completely determined by the number (or fraction) of red nodes at time $t$.
Thus, for this scenario, it follows from Equation~\eqref{eq:expected_red_fraction} that
\begin{align}\label{eq:expected_red_frac_clique}
\Ex{\phi^{(t+1)} \cond \phi^{(t)}=\phi} = \phi\left(1-f_{p^\star}(\phi)\right),
\end{align}
since the minimum and maximum fraction of red neighbors are the same for all nodes and equal among them, i.e. $\phi_{_{\min}} = \phi_{_{\max}} = \phi$.
Since in this scenario 
\[
    p = p^\star = \D{},
\] 
we get that
\begin{align}
    f_{p^\star}(\phi) 
    &= 2\left(1-p^\star\right)^2\phi^2-(1-p^\star)(3-p^\star)\phi+1
    \notag\\
    &= (2\sqrt{2}(\sqrt{2}-1)\phi -1)^2.
    \label{eq:fpstar}
\end{align}
Note that, since $f_{p^\star}$ is a square, it holds that
$\Ex{\phi^{(t+1)} \cond \phi^{(t)}=\phi} \leq \phi^{(t)}$.
Thus the converse must also hold, i.e.\ 
\begin{equation}\label{eq:blue_fraction_submartingale}
    \Ex{1-\phi^{(t+1)} \cond \phi^{(t)}=\phi} 
    \geq 1-\phi^{(t)}.
\end{equation}

\begin{lemma}\label{lemma:a botte de varianza}
    Let $B^{(t)}$ the random variable counting the number of blue nodes at time $t$ and let $\tau := \inf\{t \in \mathbb{N} : B^{(t)} \geq \frac{3}{4}n\}$ be a stopping time. 
    Starting from any configuration where $R^{(0)} \geq \frac{1}{4}n$, it holds that $\Ex{\tau} = \bigO(n)$.
\end{lemma}
\begin{proof}
    We start by proving that 
    \begin{equation}\label{eq:lower_bound_cond_variance}
        \Var{B^{(t+1)} \cond \phi^{(t)}=\phi} 
        \geq R^{(t)} \xi_r (1-\xi_r)
        \geq \delta n,
    \end{equation}
    where $\delta := \frac{(p^\star)^2(1-p^\star)}{16}$ is constant w.r.t.\ $n$.
    The core of the proof are indeed Equations~\eqref{eq:blue_fraction_submartingale} and~\eqref{eq:lower_bound_cond_variance}, 
    that allow us to show that when the number of blue nodes is sufficiently large, i.e.\  $B \geq \frac{3}{4}n$, the dynamics is a submartingale with conditional variance $\Theta(n)$. 
    Then we show that $\tau := \inf\{t \in \mathbb{N} : B^{(t)} \geq \frac{3}{4}n\}$ 
    is a hitting time with linear expectation w.r.t.\ the number of nodes $n$.
    
    Since $B^{(t)} := n(1-\phi^{(t)})$, Equation~\eqref{eq:blue_fraction_submartingale} implies that
    \[
        \Ex{B^{(t+1)} \cond \phi^{(t)}=\phi} 
        \geq B^{(t)}.
    \]
    Let $X(v)$ be an indicator random variable for the event 
    \enquote{node $v$ is \emph{blue} in the next round}.
    We have $X(v) = 1$ if $v$ is currently red and sees two blue nodes; we denote the probability of this event with $\xi_r$.
    We also have $X(v) = 1$ if $v$ is currently blue and does not see two red nodes; we denote the probability of this event with $\xi_b$.
    By indicating with $\text{Bin}(n,\,p)$ a random variable sampled from a binomial distribution with parameters $n$ (trials) and $p$ (probability of success), 
    it follows that
    \begin{align*}
        \Var{B^{(t+1)} \cond \phi^{(t)}=\phi}
        &= \Var{\sum_{v \in V} X(v) \cond \phi^{(t)}=\phi}
        \\
        &= \Var{\sum_{v \in R^{(t)}} X(v) 
            + \sum_{v \in B^{(t)}} X(v) \cond \phi^{(t)}=\phi}
        \\
        &= \Var{\text{Bin}(R^{(t)},\,\xi_r) + \text{Bin}(B^{(t)},\,\xi_b) \cond \phi^{(t)}=\phi}
        \\
        &= \Var{\text{Bin}(R^{(t)},\,\xi_r) \cond \phi^{(t)}=\phi}
            + \Var{\text{Bin}(B^{(t)},\,\xi_b) \cond \phi^{(t)}=\phi}
        \\
        &\geq \Var{\text{Bin}(R^{(t)},\,\xi_r) \cond \phi^{(t)}=\phi}
        \\
        &= R^{(t)} \xi_r (1-\xi_r).
    \end{align*}
    Recall that, by hypothesis, $R^{(t)} \geq \frac{1}{4}n$.
    Moreover, note that:
    i) $\xi_r$ (i.e.\  the probability of seeing blue twice), 
    is greater than that of seeing blue twice only considering the bias
    and thus $\xi_r \geq (p^\star)^2$;
    ii) $(1-\xi_r)$, (i.e.\  the probability of seeing at least one red), 
    is greater than the probability that the first node that we see is red,
    thus
    \(
        (1-\xi_r) \geq (1-p^\star)\phi \geq \frac{1-p^\star}{4}.
    \)
    Therefore it follows that
    \(
        \Var{B^{(t+1)} \cond \phi^{(t)}=\phi} 
        \geq R^{(t)} \xi_r (1-\xi_r)
        \geq \frac{(p^\star)^2(1-p^\star)n}{16}
        = \delta n,
    \)
    where $\delta := \frac{(p^\star)^2(1-p^\star)}{16}$.
    
    \bigskip
    Let us now define a new random variable $Z^{(t)} := (B^{(t)})^2 - \delta n t$.
    We show that the random process defined by $\{Z^{(t)}\}_{t \in \mathbb{N}}$ is 
    a submartingale. Observe that
    \begin{align*}
        \Ex{Z^{(t+1)} \cond \phi^{(t)}=\phi} 
        &= \Ex{(B^{(t+1)})^2 \cond \phi^{(t)}=\phi} - \delta n (t+1)
        \\
        &= \Ex{(B^{(t+1)}) \cond \phi^{(t)}=\phi}^2 + \Var{(B^{(t+1)}) \cond \phi^{(t)}=\phi} - \delta n (t+1)
        \\
        &\stackrel{(a)}{\geq} \Ex{(B^{(t+1)}) \cond \phi^{(t)}=\phi}^2 + \delta n - \delta n (t+1)
        \\
        &= \Ex{(B^{(t+1)}) \cond \phi^{(t)}=\phi}^2 - \delta n t
        \\
        &\stackrel{(b)}{\geq} (B^{(t)})^2 - \delta n t
        = Z^{(t)},
    \end{align*}
    where in $(a)$ we used the lower bound on the conditional variance
    (Equation~\eqref{eq:lower_bound_cond_variance}),
    and in $(b)$ we used the fact that the fraction of blue nodes is a
    submartingale (Equation~\eqref{eq:blue_fraction_submartingale}).
    
    Let us define the stopping time 
    $\tau := \inf\{t \in \mathbb{N} : B^{(t)} \geq \frac{3}{4}n\}$.
    Note that $\tau$ meets the conditions of the 
    Optional Stopping Theorem~\cite[Corollary~17.7]{levin2017markov},
    in its version for submartingales.
    Indeed:
    \begin{itemize}
        \item $\Prob{}{\tau < \infty} = 1$ holds because the probability of jumping
        in the configuration where all nodes are blue at any time time $t$ is lower
        bounded by a nonzero value independent from $t$, namely
        $\Prob{}{b^{(t+1)}=n} \geq (p^\star)^{2n}$ (in words, 
        it is at least the probability for each node to see twice color blue).
    
        \item $|Z^{(t)}| \leq K$, for every $t \leq \tau$ 
        and for some value $K$ independent from $t$, holds because 
        $|Z^{(t)}| = |(B^{(t)})^2 - \delta n t| \leq |(B^{(t)})^2| + |\delta n t| \leq n^2 + cn$, where the last inequality holds from previous point, i.e.\  since $\tau < \infty$ almost surely.
    \end{itemize}
    The Optional Stopping Theorem states that $\Ex{Z^{(\tau)}} \geq \Ex{Z^{(0)}}$;
    it then follows that
    \[
        \Ex{Z^{(\tau)}} = \Ex{(B^{(\tau)})^2} - \delta n \Ex{\tau}
        \geq \Ex{Z^{(0)}} = 0
    \]
    and, thus, we can conclude that
    \[
        \Ex{\tau} \leq \frac{\Ex{(B^{(\tau)})^2} - \Ex{Z^{(0)}}}{\delta n}
        = \frac{\Ex{(B^{(\tau)})^2}}{\delta n}
        \leq \frac{n^2}{\delta n} 
        \leq \frac{n}{\delta}.
    \]
\end{proof}


We now apply previous results on the \choice{} dynamics~\cite[Theorem 1]{cooper_fast_2015}, that claim that the \choice{} dynamics converges in $\bigO(\log n)$ rounds to a consensus on the initial majority, w.h.p., if the second largest eigenvalue of the transition matrix of a random walk on the underlying graph is less or equal to $1/\sqrt{2}$, and if the difference between the number of supporters of blue and red colors is $\Omega(n)$.
Thanks to Lemma~\ref{lemma:a botte de varianza}, we know that after $\bigO(n \log n)$ rounds the dynamics reaches a configuration where $|B^{(t)}| \geq \frac{3}{4} n$, even if starting from an initial worst-case configuration where all nodes supported state red. This implies the $\Omega(n)$ condition on difference between blue and red colors support.
Moreover note that the transition matrix of the clique with self loops is $P = \frac{1}{n} J$, where $J$ is the $n \times n$ matrix of all ones and the eigenvalues of $P$ are $1$, with multiplicity 1, and $0$ (including the second largest), with multiplicity $n-1$.

Since the \choice{} dynamics is equivalent to the \advchoice{p}{blue} when $p=0$, it follows from a stochastic domination argument that also the \advchoice{p}{blue} dynamics converges in $\bigO(\log n)$ rounds to a blue consensus, w.h.p., since we lowered the bias $p$ from $p=\D$ to $p=0$.


\paragraph*{Lower bound.}
The proof for the lower bound is divided in two parts.
Note from Equation~\eqref{eq:fpstar} that $f_{p^\star}$ has a root in $\bar{\phi} = \frac{1}{2\sqrt{2}(\sqrt{2}-1)}$. 
Combining this with Equation~\eqref{eq:expected_red_frac_clique}, it follows that when $\phi^{(t)} = \bar{\phi}$ the process is stationary in expectation, namely $\Ex{\phi^{(t+1)} \cond \phi^{(t)}=\bar{\phi}} = \bar{\phi}$.
We essentially show a lower bound on the time the process needs to reach a configuration where the number of red nodes is $\bar{\phi}n$, starting from the initial configuration in which all nodes are red.
In order to do this we use the following strategy: 
\begin{enumerate}
\item We look at the evolution over time of the difference between the number of red nodes and the target number of red nodes, namely 
\begin{equation}\label{eq:deltadefinition}
    \Delta^{(t)} := R^{(t)} - \bar{\phi} n.
\end{equation}
Note that the closer $\Delta^{(t)}$ is to zero, the smaller is the drift of the dynamics since it goes closer to the saddle point $\bar{\phi}$ where it is a martingale.

\item In Lemma~\ref{lemma:slowly to the saddle} we prove an upper bound to the difference $\Delta^{(t)}-\Delta^{(t+1)}$ in two consecutive rounds, namely that 
$\Delta^{(t)}-\Delta^{(t+1)}\leq [8p^* + \sqrt{2}]\sqrt{n\log n}$ (we recall that $p^* = 3 - 2\sqrt{2}$), starting from a configuration in which $\Delta^{(t)} \leq n^{\frac{3}{4}}\sqrt[4]{\log n}$.
The lemma allows us to give a lower bound on the time that the process needs to reach the saddle point in which $\Delta^{(t)}$ is close to $0$, w.h.p. 
In fact, if the process lies in a configuration such that $\frac{1}{2} n^{\frac{3}{4}}\sqrt[4]{\log n} \leq \Delta^{(t)} \leq n^{\frac{3}{4}}\sqrt[4]{\log n}$, 
it needs $({\frac{1}{2} n^\frac{3}{4}\sqrt[4]{\log n}})/({[8p^* + \sqrt{2}]\sqrt{n\log n}}) = \Omega(\sqrt[4]{n/\log n})$ rounds to reach a configuration such that $\Delta^{(t)} = \bigO(1)$, w.h.p.

\item In Lemma~\ref{lemma:small jumps}, instead, we prove that at any time $t$ such that $\Delta^{(t)} = \omega(n^{\frac{3}{4}}\sqrt[4]{\log n})$, e.g., at the beginning of the process in which $\Delta^{(0)} = \Theta(n)$, then the process either jumps in the desired range of values for $\Delta$ or the dynamics does not converge at all, w.h.p.
\end{enumerate}
The proof follows from the two lemmas and from an application of the Union Bound for $\tilde\bigO(n^{\frac{1}{4}})$ rounds.

\begin{lemma}\label{lemma:slowly to the saddle}
If $\Delta^{(t)} \leq n^\frac{3}{4}\sqrt[4]{\log n}$, then $(\Delta^{(t)}-\Delta^{(t+1)}) \leq [8p^* + \sqrt{2}]\sqrt{n\log n}$.
\end{lemma}
\begin{proof}
We first lower bound the decrease of the number of red nodes at each round by using a Chernoff Bound (Equation~\ref{eq:cb_red}); then we express the result as a function of $\Delta^{(t)}$. 

Let $R^{(t)}$ be the random variable counting the number of red nodes at time $t$.
Let $\epsilon := \sqrt{\frac{2\log n}{\Ex{R^{(t+1)}}}}$.
Note that $\varepsilon \in (0,1)$ since $\Delta^{(t)} \leq n^\frac{3}{4}\sqrt[4]{\log n}$ that implies that $\Ex{R^{(t+1)}} > 2 \log n$.

By applying a multiplicative form of the Chernoff bound, it holds that
\begin{equation}\label{eq:cb_red}
    \Prob{}{R^{(t+1)} < \Ex{R^{(t+1)}} (1-\epsilon) \cond \Delta^{(t)} \leq n^\frac{3}{4}\sqrt[4]{\log n}}
    \leq e^{-\frac{\epsilon^2}{2}\Ex{R^{(t+1)}}}
    = \frac{1}{n}.
\end{equation}

Recall that $\bar{\phi} = \frac{1}{2\sqrt{2}(\sqrt{2}-1)}$ is the root of $f_{p^\star}(\phi)$.
Let us define the extra number of red nodes w.r.t.\ the critical number, i.e.\  the number of red nodes such that $f_p(\phi) = 0$, as

Thanks to Equation~\ref{eq:cb_red} the following holds w.h.p.
\begin{align}
    && R^{(t+1)} &\geq \Ex{R^{(t+1)}}(1 - \epsilon)
    \notag\\
    \iff && R^{(t+1)} - \bar{\phi}n &\geq \Ex{R^{(t+1)}}(1 - \epsilon) - \bar{\phi}n
    \notag\\
    \iff && \Delta^{(t+1)} &\geq \Ex{\Delta^{(t+1)}} - \epsilon \Ex{R^{(t+1)}}
    \label{eq:deltat+1}\\
    \iff && -\Delta^{(t+1)} &\leq - \Ex{\Delta^{(t+1)}} + \epsilon \Ex{R^{(t+1)}}
    \notag\\
    \iff && \Delta^{(t)}-\Delta^{(t+1)} &\leq \Delta^{(t)} - \Ex{\Delta^{(t+1)}} + \epsilon \Ex{R^{(t+1)}}
    \notag\\
    \iff && \Delta^{(t)}-\Delta^{(t+1)} &\leq \Delta^{(t)} - \Ex{\Delta^{(t+1)}} + \sqrt{2\log n} \sqrt{\Ex{R^{(t+1)}}}
    \notag\\
    \iff && \Delta^{(t)}-\Delta^{(t+1)} &\stackrel{(a)}{\leq} \Delta^{(t)} - \Ex{\Delta^{(t+1)}} + \sqrt{2n\log n},
    \notag
\end{align}
where in $(a)$ we use that $\Ex{R^{(t+1)}} \leq n$.
In order to conclude the proof, we show in the next claim a suitable formulation for the expectation of $\Delta^{(t+1)}$ that will be used in the last steps.
\begin{quote}
    \begin{claim}\label{claim:exp_delta}
    It holds that
    \begin{equation*}
       \Ex{\Delta^{(t+1)} \cond \Delta^{(t)}, R^{(t)}} = \Delta^{(t)}\left(1- 8p^*\frac{\Delta^{(t)}}{n}\frac{R^{(t)}}{n}\right).
    \end{equation*}
    \end{claim}
    \begin{proof}
    By using Equation~\eqref{eq:expected_red_frac_clique}, we get that
    \begin{align*}
        \Ex{\Delta^{(t+1)}} &= \Ex{R^{(t+1)}} - \bar{\phi}n
        \\
        &= R^{(t)}\left[1 - f_{p^\star}\left(\frac{R^{(t)}}{n}\right) \right] - \bar{\phi}n
        \\
        &= R^{(t)} - R^{(t)}\cdot f_{p^\star}\left(\frac{R^{(t)}}{n}\right) - \bar{\phi}n
        \\
        &= \Delta^{(t)} - R^{(t)}\cdot f_{p^\star}\left(\frac{R^{(t)}}{n}\right)
        \\
        &= \Delta^{(t)} \left[ 1 - \frac{R^{(t)}}{\Delta^{(t)}}\cdot f_{p^\star}\left(\frac{R^{(t)}}{n}\right)\right].
    \end{align*}
    Moreover, by using the definition of $\Delta$ (Equation~\ref{eq:deltadefinition}), we have that
    \begin{align*}
        f_{p^\star}\left(\frac{R^{(t)}}{n}\right) &= \left( 2\sqrt{2}(\sqrt{2}-1) \frac{R^{(t)}}{n} - 1 \right)^2
        \\
        &= \left( 2\sqrt{2}(\sqrt{2}-1) \left( \frac{\Delta^{(t)} + \bar{\phi}n}{n} \right) - 1 \right)^2
        \\
        &= \left( 2\sqrt{2}(\sqrt{2}-1) \left( \frac{\Delta^{(t)}}{n} + \bar{\phi} \right) - 1 \right)^2
        \\
        &= \left( 2\sqrt{2}(\sqrt{2}-1) \frac{\Delta^{(t)}}{n} + 1 - 1 \right)^2
        \\
        &= 8(\sqrt{2}-1)^2 \left(\frac{\Delta^{(t)}}{n}\right)^2.
    \end{align*}
    
    By combining the last two equations and considering that $(\sqrt{2}-1)^2 = 3 - 2\sqrt{2} = p^\star$ we get that
    \begin{align}
        \Ex{\Delta^{(t+1)}} &= \Delta^{(t)} \left[ 1 - \frac{R^{(t)}}{\Delta^{(t)}}\cdot 8(\sqrt{2}-1)^2 \left(\frac{\Delta^{(t)}}{n}\right)^2\right]
        \notag\\
        \label{eq:expdelta}
        &= \Delta^{(t)} \left[ 1 - 8p^* \frac{\Delta^{(t)}}{n} \frac{R^{(t)}}{n} \right].
    \end{align}
    \end{proof}
\end{quote}

\noindent
Using Claim \ref{claim:exp_delta} we we conclude that
\begin{align*}
    \Delta^{(t)}-\Delta^{(t+1)} 
    &\leq \Delta^{(t)} - \Delta^{(t)} + 8p^*\frac{(\Delta^{(t)})^2}{n}\frac{R^{(t)}}{n} + \sqrt{2n\log n}
    \\
    &\stackrel{(a)}{\leq} 8p^*\frac{(\Delta^{(t)})^2}{n} + \sqrt{2n\log n}
    \\
    &\stackrel{(b)}{\leq} [8p^* + \sqrt{2}]\sqrt{n\log n},
\end{align*}
where in $(a)$ we use that $R^{(t)} \leq n$ and in $(b)$ that $\Delta^{(t)} \leq n^\frac{3}{4}\sqrt[4]{\log n}$.
\end{proof}

\begin{lemma}\label{lemma:small jumps}
The process does not jump from a configuration where $\Delta \geq n^\frac{3}{4}\sqrt[4]{\log n}$ to a configuration where $\Delta < \frac{1}{2} n^\frac{3}{4}\sqrt[4]{\log n}$, w.h.p.
\end{lemma}
\begin{proof}
Starting from Equation~\eqref{eq:expdelta} we get
\begin{align}
    \Ex{\Delta^{(t+1)}} 
    &= \Delta^{(t)} \left[ 1 - 8(\sqrt{2}-1)^2 \frac{\Delta^{(t)}}{n} \frac{R^{(t)}}{n} \right]
    \notag\\
    &= \Delta^{(t)} \left[ 1 - 8(\sqrt{2}-1)^2 \cdot \left(\frac{R^{(t)}-\bar{\phi}n}{n}\right) \frac{R^{(t)}}{n} \right]
    \notag\\
    &\geq \Delta^{(t)} \left[ 1 - 8(\sqrt{2}-1)^2 \cdot \left(\frac{n-\bar{\phi}n}{n}\right) \right]
    \notag\\
    &\geq \Delta^{(t)} \left[ 1 - 8(\sqrt{2}-1)^2 \cdot (1-\bar{\phi}) \right].
    \label{eq:lb_deltat+1}
\end{align}
Note that $1 - 8(\sqrt{2}-1)^2 \cdot (1-\bar{\phi}) \approx 0.8 > 0$.
Thus, conditioning with respect to the complement of the event of Equation~\eqref{eq:cb_red} and using 
Equation~\eqref{eq:deltat+1} as starting point, we get the following bound to the value of the random variable $\Delta^{(t+1)}$, assuming a suitably large value of $n$:
\begin{align*}
    \Delta^{(t+1)} &\geq \Ex{\Delta^{(t+1)}} - \epsilon \Ex{R^{(t+1)}}
    \\
    &\stackrel{(a)}{\geq} \Ex{\Delta^{(t+1)}} - \sqrt{2n\log n}
    \\
    &\stackrel{(b)}{\geq} (1 - 8(\sqrt{2}-1)^2 \cdot (1-\bar{\phi})) \Delta^{(t)} - \sqrt{2n\log n}
    \\
    &\stackrel{(c)}{\geq} (1 - 8(\sqrt{2}-1)^2 \cdot (1-\bar{\phi}) - \epsilon') \Delta^{(t)}
    \\
    &\geq \frac{1}{2}\Delta^{(t)}
    \\
    &\geq \frac{1}{2}n^\frac{3}{4}\sqrt[4]{\log n},
\end{align*}
where in $(a)$ we used that $\epsilon = \sqrt{\frac{2\log n}{\Ex{R^{(t+1)}}}}$ and that $\Ex{R^{(t+1)}} \leq n$, in $(b)$ we used Inequality~\eqref{eq:lb_deltat+1},
and in $(c)$ we used that $\sqrt{n\log n} = o(\Delta^{(t)})$ since $\Delta^{(t)} \geq n^\frac{3}{4}\sqrt[4]{\log n}$ and thus for any $\epsilon' > 0$ there exists a suitably large $n$.
\end{proof}

\section{Results on Core-Periphery Networks}
\label{sec:core-periphery}
In this section we present theoretical and experimental results about the behavior of the \choice{} on networks exhibiting a core-periphery structure.
In particular, in Section \ref{sec:theo-cp} we apply the technical result of Theorem~\ref{thm:noisyd} on a suitably defined class of Core-Periphery networks.
In Section \ref{sec:experiments}, instead, we introduce a new heuristic to identify a good core-periphery partition and simulate the behavior of the \choice{} in the same setting we theoretically analyzed.


\subsection{Theoretical Results}
\label{sec:theo-cp}

Let us start by giving a formal definition of Core-Periphery networks on which it is possible to apply the results of Theorem~\ref{thm:noisyd}.
\begin{definition}
    \label{def:regcore}
    Let $n \in \mathbb{N}$ and $\epsilon,c_r,c_d \in \mathbb{R}^+$, with
    $ \frac{1}{2} \leq \epsilon \leq 1$.
	We define an $(n, \epsilon, c_r,c_d)$-Core-Periphery network $G=(V, E)$
    as a network where each \node{} either belongs to the core $\core{}$ 
    or to the periphery $\periphery{}$, i.e., $V = \core \dijcup \periphery$,
    with $\abs{\core} = n^\epsilon$ and $\abs{\periphery} = n$,
    and such that:
    \begin{itemize}
        \item for each \node $u \in \core$, it holds that
        $\vert N(u) \cap \core\vert = c_r \cdot \vert N(u) \cap \periphery\vert$,
        \item for each \node $v \in \periphery$, it holds that
        $\vert N(v) \cap \core\vert = c_d \cdot \vert N(v) \cap \periphery\vert$,
    \end{itemize}
    where $N(v)$ is the set of neighbors of \node $v$.
\end{definition}

Note that the definition we just provided matches the requirements
of the core-periphery structure as axiomatized by Avin et
al.~\cite{avin_core-periphery_2014}.
However, observe that it is more restrictive: The values $c_r$ and $c_d$ define
properties that hold for each \node of the network and not only globally, i.e.,
for the partition induced by the core.

Let $G = (V, E)$ be an $(n, \epsilon, c_r, c_d)$-Core-Periphery network. 
We consider the natural setting in which, w.l.o.g., 
the \nodes belonging to the core $\core$ initially support the color \emph{blue}
while the remaining \nodes, from the periphery $\periphery$, support the color \emph{red}.
With the following theorem, we prove a phase transition of the \choice\ on Core-Periphery networks, showing how to apply the technical results provided in Theorem \ref{thm:noisyd}.

\begin{theorem}
    \label{thm:phase_transition_cp}
    Let $c^\star = \frac{\sqrt{2}-1}{2}$ be a universal constant.
    Let $G=(V, E)$ be an $(n, \epsilon, c_r, c_d)$-Core-Periphery network
    such that $d_{_{\min}} = \omega(\log n)$ and let $G$ run the \choice{}.
    It holds that:
    \begin{itemize}
        \item $\kappa(n)$-consensus: There exists a constant $\lambda$ such that for all $\kappa(n) : \mathbb{N} \rightarrow [0,1]$ with $\kappa(n) \geq \lambda \cdot \frac{\log n}{d_{_{\min}}}$, there exists a round $t \in \bigO(\log(1/\kappa(n)))$ such that:
        \begin{itemize}
            \item if $c_r > n^{(\epsilon + \delta)/2}$ and $c_d > c^\star$ by a constant, then $\vol(B^{(t)}) = (1 - \kappa(n))\vol(V)$, w.h.p.
        \end{itemize}
        \item Metastability: 
        For each round $t \in \poly(n)$ it holds that:
        \begin{itemize}
            \item if $c_r > \frac{1}{c^\star}$ by a constant, then $\vol(B^{(t)}) \geq \frac{3}{4}\vol(\core)$, w.h.p.
            \item if $c_d < c^\star$ by a constant, then $\vol(R^{(t)}) \geq \frac{3}{4}\vol(\periphery)$, w.h.p.
        \end{itemize}
    \end{itemize}
\end{theorem}

\begin{proof}
Let $q_{_\core} = \frac{| N(u) \cap \periphery |}{d_u}$
be the probability that an agent $u \in \core$ picks a neighbor in the periphery,
and let $q_{_\periphery} = \frac{| N(v) \cap \core |}{d_v}$
be the probability that an agent $v \in \periphery$ picks a neighbor in the core.
The relations below follow from Definition~\ref{def:regcore}:
\begin{align}
c_r &= | N(u) \cap \core | \,/\, | N(u) \cap \periphery |
    = (1-q_{_\core}) \,/\, q_{_\core}
& \forall u \in \core, 
\label{eq:robustnessprobability} 
\\
c_d &= | N(v) \cap \core | \,/\, | N(v) \cap \periphery |
    = q_{_\periphery} \,/\, (1-q_{_\periphery})
& \forall v \in \periphery.
\label{eq:dominanceprobability}
\end{align}

Let $c^\star = \frac{\sqrt{2} - 1}{2}$ be the constant which later defines a threshold
between \emph{\consensus{}} and \emph{metastability} behavior.
We get
\begin{align}
	c_r = 1 \,/\, (c^\star + \delta_r) \quad&\implies\quad q_{_\core} = \D + \delta_r' \label{eq:robustnessprobability_rel} \\
	c_d = c^\star + \delta_d \quad&\implies\quad q_{_\periphery} = \D + \delta_d' \label{eq:dominanceprobability_rel}
\end{align}
for $\delta_r$ and $\delta_r'$ ($\delta_d$ and $\delta_d'$) which are either both positive or both negative.

\paragraph{\Consensus.}
For the \consensus\ result, we require a high robustness of the core such that it
remains monochromatic for $\bigO(\log(1/\kappa(n)))$ rounds, where $\kappa(n) : \mathbb{N} \rightarrow [0,1]$ with $\kappa(n) \geq \lambda \cdot \frac{\log n}{d_{_{\min}}}$ by hypothesis. 
The following lemma is needed to link the robustness with this property.
\begin{quote}
    \begin{lemma}
        Let $\epsilon$ and $\delta$ be two positive constants. 
        Let $G = (V,E)$ be a graph of $n^\epsilon$ agents, 
        and let $0 \leq p \leq n^{-{(\epsilon + \delta)}/{2}}$. 
        Starting from a configuration such that each \node\ 
        initially supports the \emph{blue} color, 
    	within $\bigO(\log(1/\kappa(n)))$ rounds of the \advchoice{$p$}{red}
        no agent becomes \emph{red}, w.h.p.
        \label{lemma:core}
    \end{lemma}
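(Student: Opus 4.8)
The plan is a first-moment argument exploiting that, while the core is still entirely \emph{blue}, the only way an agent can turn \emph{red} in a round is through the bias term: adopting \emph{red} requires both of the agent's two picks to ``support \emph{red}'', and as long as every agent is \emph{blue} this happens precisely when both of the corresponding (independent) bias coins fire, i.e.\ with probability exactly $p^2 \le n^{-(\epsilon+\delta)}$. Since the core has only $n^\epsilon$ agents and we only need it to stay monochromatic for $\bigO(\log n)$ rounds, a union bound over agents and rounds will bound the total failure probability by $\bigO(n^{-\delta}\log n) = o(1)$, hence \whp.

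Concretely, I would let $T$ be the first round in which some agent becomes \emph{red} (with $T=\infty$ if this never happens) and show $\Prob{}{T \le K} = o(1)$ for $K = \bigO(\log n)$. The key observation is that on $\{T \ge t\}$ no agent has flipped in rounds $1,\dots,t-1$, so all agents are \emph{blue} at the start of round $t$; therefore for a fixed agent $v$ each of its two picks ``supports \emph{red}'' exactly when the corresponding bias coin (probability $p$, independent of the history and of the other pick) fires, which gives $\Prob{}{v \text{ becomes red in round } t \cond T \ge t} = p^2$. A union bound over the $n^\epsilon$ agents then yields $\Prob{}{T=t \cond T \ge t} \le n^\epsilon p^2 \le n^{-\delta}$, and since $\Prob{}{T=t} \le \Prob{}{T=t \cond T \ge t}$, summing over $t=1,\dots,K$ gives $\Prob{}{T \le K} \le K\,n^{-\delta}$. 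With $K = \bigO(\log n)$ this is $\bigO(n^{-\delta}\log n) \le n^{-\delta/2}$ for large $n$, which is exactly the claimed high-probability statement that no agent becomes \emph{red} within $\bigO(\log n)$ rounds.

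There is no real analytic difficulty here; the one point that needs care is the conditioning: the clean per-agent, per-round flip probability $p^2$ holds only while the whole core is \emph{blue}, so the argument has to be organized around the stopping time $T$ (the first flip) rather than stated as an unconditional per-round bound, and one must invoke that the randomness used in round $t$ is independent of the configuration reached after round $t-1$. I would also point out why the horizon is $\bigO(\log n)$ and not $\poly(n)$: for a general $n^{a}$-round horizon the union bound would require $\delta > a$, whereas a logarithmic horizon suffices for every fixed $\delta>0$ — and that is precisely the horizon needed by the subsequent \consensus\ argument.
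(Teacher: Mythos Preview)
Your argument is correct and essentially identical to the paper's: both compute the per-agent flip probability $p^2 \le n^{-(\epsilon+\delta)}$ while the core is monochromatic and then union bound over the $n^\epsilon$ agents and the $\bigO(\log n)$ rounds to get $\bigO(n^{-\delta}\log n)=o(1)$. Your explicit stopping-time formulation handles the conditioning a touch more carefully than the paper's one-line union bound, but the content is the same.
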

    \begin{proof}
    	The probability that an agent $v$ changes its color to \emph{red} at time $t$, 
    	given that all the other agents are still \emph{blue}, is
        \begin{align*}
        & \Prob{}{v \in R^{(t+1)} \cond V = B^{(t)}} 
        = p^2
        \leq \left(n^{-(\epsilon + \delta)/2}\right)^2 
        = n^{-(\epsilon + \delta)}.
        \end{align*}
    	Applying the union bound over all the agents and over $\tau = \mathcal{O}(\log(1/\kappa(n)))$ rounds, we get
        \begin{align*}
        &\Prob{}{
            \exists t \leq \tau,\exists v \in V 
            : v \in R^{(t+1)} \cond V = B^{(t)}} 
            \leq \frac{n^\epsilon \cdot \tau}{n^{\epsilon+\delta}} 
            = \bigO(n^{-\delta/2}),
        \end{align*}
        since $\tau = \mathcal{O}(\log(1/\kappa(n)))$
        and $\kappa(n) \geq \lambda \cdot \frac{\log n}{d_{_{\min}}}$.
        Thus, all agents in the graph remain \emph{blue} 
    	for any logarithmic number of rounds, w.h.p.
    \end{proof}
\end{quote}
If $c_r \geq n^{(\epsilon + \delta)/2}$, by Equation~\eqref{eq:robustnessprobability} it follows
that
\(
	q_{_\core} = 1 \,/\, (c_r + 1) < 1 \,/\, c_r \leq n^{-(\epsilon + \delta)/2}.
\)
Thus, we can apply Lemma~\ref{lemma:core} 
obtaining that the agents in the core never changes color for $\mathcal{O}(\log(1/\kappa(n)))$
rounds, w.h.p.
Therefore, for any $\mathcal{O}(\log(1/\kappa(n)))$ number of rounds, the process is equivalent to a \advchoice{$q_{_\periphery}$}{$blue$}
run by the periphery. 
Since $c_d > c^\star$ and thus $q_{_\periphery} > \D$ by Equation \eqref{eq:dominanceprobability_rel},
we can apply Theorem~\ref{thm:noisyd} and get a $\kappa(n)$-consensus on the \emph{blue} color within $\mathcal{O}(\log(1/\kappa(n)))$ number of rounds, w.h.p.

\paragraph{Metastability.}
Consider the following worst case scenario:
Every time an \node in the core (periphery) 
chooses a random neighbor belonging to the periphery (core), 
then that neighbor is \emph{red} (\emph{blue}).
In this scenario, the \choice\ can be thought of
as two independent \advchoice{p}{$\sigma$}
in which for the core $p = q_{_\core}$ and $\sigma = \mathit{red}$,
and for the periphery $p = q_{_\periphery}$ and $\sigma = \mathit{blue}$.
From $c_r > \frac{1}{c^\star}$ and $c_d < c^\star$ and
Equations~\eqref{eq:robustnessprobability_rel} and~\eqref{eq:dominanceprobability_rel},
it follows that $q_{_\core}$ and $q_{_\periphery}$ are less than $\D$.
By applying the metastability result of Theorem~\ref{thm:noisyd},
we get that the volume of the adversary never exceeds $\frac{1-3p}{4(1-p)}$
of the network's volume.
Since both $q_{_\core}$ and $q_{_\periphery}$ are smaller than $\D$,
we have that $\frac{1-3p}{4(1-p)} \leq \frac{1}{4}$
(as the inequality is tight for $p=0$ and its value is decreasing).
Thus, the volumes of \emph{red} (\emph{blue}) agents in the core (periphery) 
are at most a fraction of $\frac{1}{4}$.
Therefore, the volumes of \emph{blue} and \emph{red} agents in the whole network 
are at least $\frac{3}{4}$ of the volumes of $\core$ and $\periphery$, respectively.
\end{proof}

\subsection{Experimental Results}
\label{sec:experiments}

In the previous Section~\ref{sec:theo} we formally studied the \choice\ 
on Core-Periphery networks, observing a phase transition phenomenon
that appears on a \emph{dominance} threshold $c^\star = \frac{\sqrt{2}-1}{2}$. 
Here, we report the results of the \emp data 
obtained by simulating the \choice\ on real-world networks.
Furthermore, we discuss our results 
and compare them with our theoretical analysis.
The source code of the experiments is freely available.%
\footnote{\url{https://gitlab.com/anusser/2_choices_on_core_periphery/-/tags/AAMAS2018}}

We simulated the \choice\ on 70 real-world networks, 
out of which 25 taken from KONECT~\cite{kunegis2013konect} 
and 45 from SNAP~\cite{snapnets}.
Detailed information regarding the networks and the results of the experiments 
are reported in Table~\ref{tab:datasets}.
The networks chosen for the experiments are drawn from a variety of domains
including social networks, communication networks, road networks, and web graphs;
moreover, they range in size from thousands of nodes and thousands of edges
up to roughly one million of nodes and tens of millions of edges.
Before simulating the \choice{}, we pre-process the networks 
in order to match the theoretical setting.
In particular, for all the networks, we remove the orientation of the edges and all loops,
and we work on the largest (w.r.t.\ the number of nodes) connected component.

The first issue we faced simulating the \choice\ was
the extraction of the set of agents representing the core.
In fact, there is no exact definition of what a \emph{good} core is
with respect to \emph{dominance} and \emph{robustness} values.
We started by using a simple heuristic 
to extract the core, namely the \emph{$k$-rich-club} method~\cite{zhou2004rich}:
This method establishes the core $\core$ as the set of $k$ agents with highest degree
and the periphery $\periphery$ as the remaining agents.
Avin et al.~\cite{avin_core-periphery_2014} empirically show that
when $k$ is at the \emph{symmetry point},
i.e., $k$ is chosen such that $vol(\core) \approx vol(\periphery)$,
the core found by this method
is sublinear in size with respect to the number of agents of the network.
We remark that if $vol(\core) = vol(\periphery)$
then, from the definitions of \emph{robustness} and \emph{dominance}, 
it follows that
\(
	c_r = \frac{1}{c_d}.
\)

We initially used the \emph{$k$-rich-club} method to extract the core but
noted that this simple heuristic produces a core with very low
\emph{robustness} values, contrary to what common sense would suggest to be a
\emph{good core}.
In particular, low robustness values imply that the \emph{dominance} values never
go below our theoretical threshold $c^\star$ (see columns  $\bar{c}_r$ and $\bar{c}_d$ in Table~\ref{tab:datasets}), which hinders the comparison
between theoretical and experimental results.
Indeed, in our theoretical analysis we assume that the core never changes
color, i.e., that the \emph{robustness} is high; however, in the experiments the
core was very unstable when using the \emph{$k$-rich-club} method.
The main issue of this method is that it does not
take into account the topological structure of the network, e.g.,
if we consider a regular graph with a well defined core-periphery structure
(which satisfies Definition \ref{def:regcore})
the \emph{$k$-rich-club} method would identify the core to be a random subset of nodes.

\begin{algorithm}[ht!]
\caption{Densest-Core Extraction.}
\begin{algorithmic}[1]
\Procedure{DensestCore}{$G$}
	\State $\core^\star \gets \emptyset$
	\Do
		\State $\core \gets \emptyset;\, D \gets G$
		\While {$D \neq \emptyset$}
			\State $v \gets \Call{LowestDegreeNode}{D}$
			\State $D \gets D \setminus \{v\}$
			\If {$\Call{Density}{D} > \Call{Density}{\core}$ \textbf{and} \\
				$\Call{FractionOfVolume}{\core^\star \cup D} \leq \frac{1}{2}$}
				\State $\quad \core \gets D$
			\EndIf
		\EndWhile
		\State $\core^\star \gets \core^\star \cup \core$
		\State $G \gets G \setminus \core$
	\doWhile {$\core \neq \emptyset$}
	\State \Return $\core^\star$
\EndProcedure
\end{algorithmic}
\label{alg:densest-core}
\end{algorithm}

Therefore, we introduce a novel heuristic for extracting the core which takes the
network topology into account by
prioritizing the \emph{robustness} of the core 
over its \emph{dominance}.
The procedure, which we refer to as \emph{densest-core} method, 
is described in Algorithm~\ref{alg:densest-core}. 
Informally, it iteratively extracts the densest subgraph from the network
and adds it to the core unless the core's volume becomes too large.
In order to compute this constrained densest subgraph, it uses a variation of the 2-approximation algorithm
\cite{charikar2000greedy},
which chooses every time the densest subgraph
that will not make the core's volume larger than the periphery's volume. 

We apply the \emph{densest-core} method to the networks and, as expected, we
obtain higher \emph{robustness} and lower \emph{dominance} values 
compared to the \emph{$k$-rich-club} method. 
The data reported in Table~\ref{tab:datasets} shows that
the \emph{robustness} of the core extracted by our method is higher
in all the considered datasets but one. 
Indeed, we finally obtain \emph{dominance} values below
the theoretical threshold $c^\star$.

We proceed as follows:
We initialize all the agents in $\core$ with \emph{blue}
and all the agents in $\periphery$ with \emph{red}.
Then, we simulate the \choice{} on each network, keeping track of 
the volumes of \emph{blue} and \emph{red} agents in each iteration.
We declare an \emph{\consensus} on the majority's color 
if within $\abs{V}$ iterations either the \emph{red} or the \emph{blue} agents
reach a volume greater than 95\% of the network's volume.
Otherwise we consider the simulation \emph{metastable} --
waiting for a superpolynomial number of rounds would be infeasible.
The experiments were repeated 50 times for each network.

\begin{figure}[t!]
	\centering
	\includegraphics[width=\linewidth]{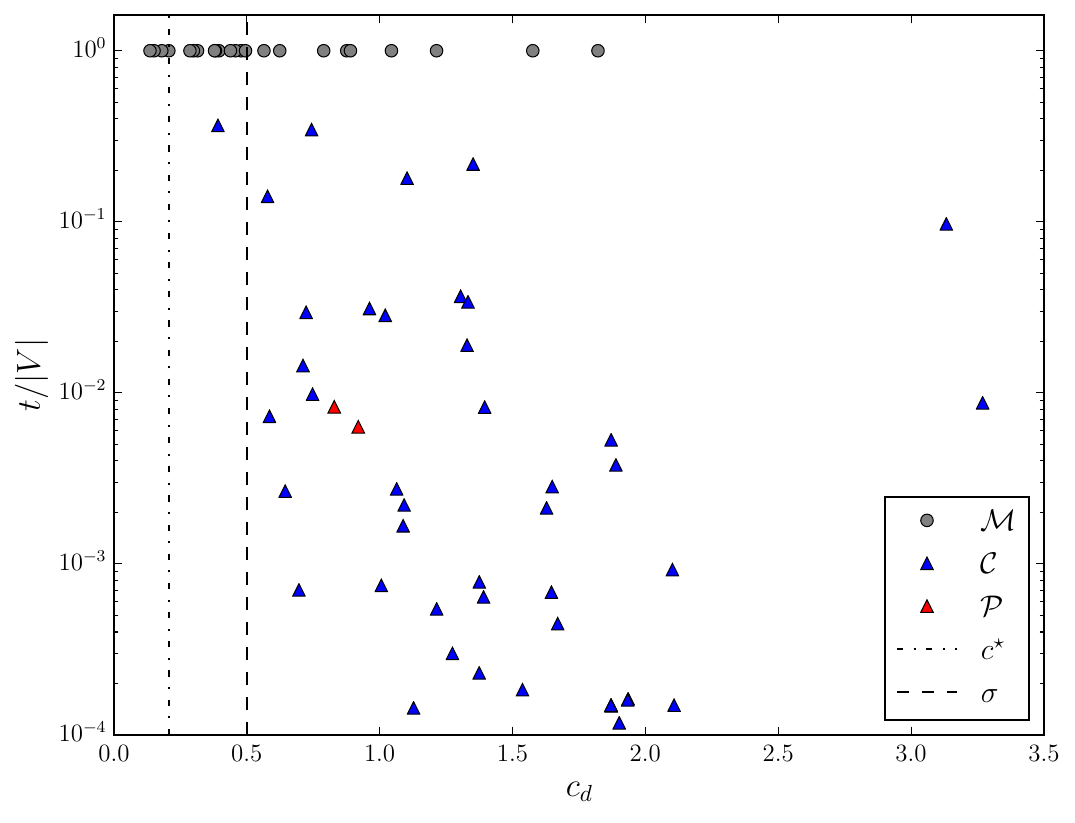}
	\caption{\Consensus\ and metastability of the experiments compared to 
	the theoretical and empirical thresholds $c^\star$ and $\sigma$.
	In 81\% of the runs there is an \consensus\ if $c_d > \sigma$;
	the 86\% of them are metastable when $c_d < \sigma$.
	The value $t$ is the arithmetic mean of the number of rounds 
	until \consensus/metastability was declared.
	}
	\label{fig:experimental-threshold}
\end{figure}

{\scriptsize
\begin{longtable}{l|rrccccc}
\caption{
Experimental data. \emph{Source} reports the source of the dataset, i.e.\ SNAP (S) or KONECT (K).
The values $c_r$ and $c_d$ are the \emph{robustness} and \emph{dominance}
obtained using the \emph{densest-core} method;
the values $\bar{c}_r$ and $\bar{c}_d$ are the \emph{robustness} and \emph{dominance}
obtained using the \emph{$k$-rich-club} method.
$\core$ and $\periphery$ are the fraction of experiments in which 
the core's and the periphery's color respectively spread to reach an \emph{\consensus}, 
while $\mathcal{M}$ is the fraction of experiments in which there is \emph{metastability},
all having the core extracted with the \emph{densest-core} method.
K stands for thousand, M for million.}
\\
\rowcolor{gray!30}
\emph{Dataset (Source)} & $|V|$ 	& $|E|$ 	& $c_r\,		(\bar{c}_r)$ 	& $c_d\,		(\bar{c}_d)$ 	& $\core$ & $\periphery$ & $\mathcal{M}$ \\
\hline
\endfirsthead
\rowcolor{gray!30}
\emph{Dataset (Source)} & $|V|$ 	& $|E|$ 	& $c_r\,		(\bar{c}_r)$ 	& $c_d\,		(\bar{c}_d)$ 	& $\core$ & $\periphery$ & $\mathcal{M}$ \\
\hline
\endhead
\endfoot
\endlastfoot
\rowcolor{gray!20}	Chicago (K)		& 0.8K		& 1.6K		& 6.55 		(0.10) & 0.15		(9.72) 	& 0.00 	& 0.00 	& 1.00 \\
\rowcolor{blue!20}	email-Eu-core (S)				& 0.9K		& 32.1K		& 0.75 		(0.53) & 1.32		(1.88) 	& 0.92 	& 0.08		& 0.00 \\
\rowcolor{gray!20}	Euroroad (K)			& 1.0K		& 2.6K		& 5.53 		(0.62) & 0.18		(1.61) 	& 0.00 	& 0.00 	& 1.00 \\
\rowcolor{gray!20}	Blogs (K)				& 1.2K		& 33.4K		& 0.62 		(0.38) & 1.57		(2.60) 	& 0.00 	& 0.00 	& 1.00 \\
\rowcolor{gray!20}	Traffic Control (K)					& 1.2K		& 4.8K		& 1.25 		(0.51) & 0.78		(1.96) 	& 0.00 	& 0.00 	& 1.00 \\
\rowcolor{blue!20}	Protein (K)				& 1.4K		& 3.8K		& 0.90 		(0.33) & 1.10		(2.95) 	& 1.00 	& 0.00 	& 0.00 \\
\rowcolor{gray!20}	US Airport (K)			& 1.5K		& 34.4K		& 0.54 		(0.48) & 1.82		(2.10) 	& 0.00 	& 0.00 	& 1.00 \\
\rowcolor{blue!20}	Stelzl (K)				& 1.6K		& 6.2K		& 1.03 		(0.36) & 0.96		(2.73) 	& 1.00 	& 0.00 	& 0.00 \\
\rowcolor{blue!20}	Bible (K)				& 1.7K		& 18.1K		& 0.74 		(0.54) & 1.33		(1.84) 	& 0.98 	& 0.02		& 0.00 \\
\rowcolor{blue!20}	Hamster full (K)				& 2.0K		& 32.1K		& 0.96 		(0.66) & 1.02		(1.51) 	& 1.00 	& 0.00 	& 0.00 \\
\rowcolor{blue!20}	Opsahl OF (K)			& 2.9K		& 31.2K		& 0.76 		(0.55) & 1.30		(1.81)		& 1.00 	& 0.00 	& 0.00 \\
\rowcolor{blue!20}	OpenFlights (K)					& 3.3K		& 38.4K		& 0.73 		(0.50) & 1.35		(1.98) 	& 0.80 	& 0.00 	& 0.20 \\
\rowcolor{blue!20}	bitcoin-alpha (S)				& 3.7K		& 28.2K		& 0.53 		(0.39) & 1.87		(2.52) 	& 1.00 	& 0.00 	& 0.00 \\
\rowcolor{gray!20}	ego-Facebook (S)		& 4.0K		& 176.4K	& 4.83 		(1.53) & 0.20		(0.65) 	& 0.00 	& 0.00 	& 1.00 \\
\rowcolor{gray!20}	ca-GrQc (S)						& 4.1K		& 26.8K		& 3.33 		(1.29) & 0.29		(0.77) 	& 0.00 	& 0.00 	& 1.00 \\
\rowcolor{gray!20}	US power grid (K)			& 4.9K		& 13.1K		& 3.17 		(0.53) & 0.31		(1.86) 	& 0.00 	& 0.00 	& 1.00 \\
\rowcolor{blue!20}	bitcoin-otc (S)					& 5.8K		& 42.9K		& 0.52 		(0.38) & 1.88		(2.59) 	& 1.00 	& 0.00 	& 0.00 \\
\rowcolor{red!20}	p2p-Gnutella08 (S)				& 6.2K		& 41.5K		& 1.20 		(0.53) & 0.82		(1.86) 	& 0.00 	& 1.00 	& 0.00 \\
\rowcolor{blue!20}	Route Views (K)					& 6.4K		& 25.1K		& 0.30 		(0.16) & 3.26		(6.13) 	& 0.96 	& 0.04		& 0.00 \\
\rowcolor{blue!20}	wiki-Vote (S)					& 7.0K		& 201.4K	& 0.60 		(0.44) & 1.64		(2.24) 	& 1.00 	& 0.00 	& 0.00 \\
\rowcolor{red!20}	p2p-Gnutella09 (S)				& 8.1K		& 52.0K		& 1.08 		(0.53) & 0.91		(1.86) 	& 0.00 	& 1.00 	& 0.00 \\
\rowcolor{blue!20}	ca-HepPh (S)					& 8.6K		& 49.6K		& 1.40 		(0.69) & 0.71		(1.44) 	& 1.00 	& 0.00 	& 0.00 \\
\rowcolor{blue!20}	p2p-Gnutella06 (S)				& 8.7K		& 63.0K		& 0.91 		(0.53) & 1.09		(1.87) 	& 1.00 	& 0.00 	& 0.00 \\
\rowcolor{blue!20}	p2p-Gnutella05 (S)				& 8.8K		& 63.6K		& 0.93 		(0.54) & 1.06		(1.83) 	& 0.86 	& 0.14		& 0.00 \\
\rowcolor{blue!20}	PGP (K)					& 10.6K		& 48.6K		& 2.54 		(1.18) & 0.39		(0.84) 	& 1.00 	& 0.00 	& 0.00 \\
\rowcolor{blue!20}	p2p-Gnutella04 (S)				& 10.8K		& 79.9K		& 0.91 		(0.52) & 1.08		(1.90) 	& 1.00 	& 0.00 	& 0.00 \\
\rowcolor{gray!20}	ca-HepTh (S)					& 11.2K		& 235.2K	& 3.49 		(2.39) & 0.28		(0.41) 	& 0.00 	& 0.00 	& 1.00 \\
\rowcolor{blue!20}	ca-AstroPh (S)					& 17.9K		& 393.9K	& 1.54 		(0.84) & 0.64		(1.18) 	& 1.00 	& 0.00 	& 0.00 \\
\rowcolor{blue!20}	ca-CondMat (S)					& 21.3K		& 182.5K	& 1.70 		(0.68) & 0.58		(1.46) 	& 1.00 	& 0.00 	& 0.00 \\
\rowcolor{blue!20}	p2p-Gnutella25 (S)				& 22.6K		& 109.3K	& 0.72 		(0.41) & 1.37		(2.43) 	& 1.00 	& 0.00 	& 0.00 \\
\rowcolor{blue!20}	E.A.T. (K)							& 23.1K		& 594.1K	& 0.60 		(0.48) & 1.64		(2.07) 	& 0.96 	& 0.04		& 0.00 \\
\rowcolor{blue!20}	Cora citation (K)				& 23.1K		& 178.3K	& 1.37 		(0.54) & 0.72		(1.83) 	& 1.00 	& 0.00 	& 0.00 \\
\rowcolor{blue!20}	CAIDA (K)			& 26.4K		& 106.7K	& 0.31 		(0.16) & 3.13		(6.03) 	& 1.00 	& 0.00 	& 0.00 \\
\rowcolor{blue!20}	p2p-Gnutella24 (S)				& 26.4K		& 130.7K	& 0.71 		(0.42) & 1.39		(2.34) 	& 1.00 	& 0.00 	& 0.00 \\
\rowcolor{blue!20}	cit-HepTh (S)					& 27.4K		& 704.0K	& 1.33 		(0.74) & 0.74		(1.34) 	& 1.00 	& 0.00 	& 0.00 \\
\rowcolor{blue!20}	Digg (K)			& 29.6K		& 169.5K	& 0.59 		(0.49) & 1.67		(2.01) 	& 1.00 	& 0.00 	& 0.00 \\
\rowcolor{blue!20}	Linux (K)						& 30.8K		& 426.4K	& 0.47 		(0.24) & 2.10		(4.14) 	& 0.90 	& 0.10 	& 0.00 \\
\rowcolor{blue!20}	email-Enron (S)					& 33.6K		& 361.6K	& 0.71 		(0.54) & 1.39		(1.84) 	& 1.00 	& 0.00 	& 0.00 \\
\rowcolor{blue!20}	cit-HepPh (S)					& 34.4K		& 841.5K	& 1.34 		(0.61) & 0.74		(1.61) 	& 1.00 	& 0.00 	& 0.00 \\
\rowcolor{blue!20}	Internet topology (K)					& 34.7K		& 215.4K	& 0.61 		(0.32) & 1.62		(3.08) 	& 0.88 	& 0.00 	& 0.12 \\
\rowcolor{blue!20}	p2p-Gnutella30 (S)				& 36.6K		& 176.6K	& 0.82 		(0.44) & 1.21		(2.23) 	& 1.00 	& 0.00 	& 0.00 \\
\rowcolor{blue!20}	loc-Brightkite (S)				& 56.7K		& 425.8K	& 0.99 		(0.71) & 1.00		(1.40) 	& 1.00 	& 0.00 	& 0.00 \\
\rowcolor{blue!20}	p2p-Gnutella31 (S)				& 62.5K		& 295.7K	& 0.78 		(0.44) & 1.27		(2.26) 	& 1.00 	& 0.00 	& 0.00 \\
\rowcolor{blue!20}	soc-Epinions1 (S)				& 75.8K		& 811.4K	& 0.72 		(0.60) & 1.37		(1.65) 	& 1.00 	& 0.00 	& 0.00 \\
\rowcolor{blue!20}	Slashdot081106 (S)		& 77.3K		& 937.1K	& 0.51 		(0.46) & 1.93		(2.13) 	& 0.98 	& 0.02	& 0.00 \\
\rowcolor{blue!20}	soc-Slashdot0811 (S)			& 77.3K		& 938.3K	& 0.51 		(0.46) & 1.93		(2.13) 	& 1.00 	& 0.00 	& 0.00 \\
\rowcolor{gray!20}	ego-Twitter (S)		& 81.3K		& 2.6M	& 1.12 		(0.57) & 0.89		(1.75) 	& 0.00 	& 0.00 	& 1.00 \\
\rowcolor{blue!20}	Slashdot090216 (S)		& 81.8K		& 995.3K	& 0.53 		(0.48) & 1.87		(2.08) 	& 1.00 	& 0.00 	& 0.00 \\
\rowcolor{blue!20}	Slashdot090221 (S)		& 82.1K		& 1.0M	& 0.53 		(0.48) & 1.87		(2.08) 	& 1.00 	& 0.00 	& 0.00 \\
\rowcolor{blue!20}	soc-Slashdot0922 (S)			& 82.1K		& 1.0M	& 0.53 		(0.47) & 1.87		(2.08) 	& 1.00 	& 0.00 	& 0.00 \\
\rowcolor{gray!20}	Prosper loans (K)				& 89.1K		& 6.6M	& 0.82 		(0.47) & 1.21		(2.10) 	& 0.00 	& 0.00 	& 1.00 \\
\rowcolor{blue!20}	Livemocha (K)					& 104.1K	& 4.3M	& 0.47 		(0.38) & 2.10		(2.56) 	& 0.94 	& 0.06		& 0.00 \\
\rowcolor{gray!20}	Flickr (K)					& 105.7K	& 4.6M	& 2.27 		(1.07) & 0.43		(0.92) 	& 0.00 	& 0.00 	& 1.00 \\
\rowcolor{gray!20}	ego-Gplus (S)					& 107.6K	& 24.4M	& 0.95 		(0.54) & 1.04		(1.82) 	& 0.00 	& 0.00 	& 1.00 \\
\rowcolor{blue!20}	epinions (S)			& 119.1K	& 1.4M	& 0.64 		(0.52) & 1.53		(1.89) 	& 1.00 	& 0.00 	& 0.00 \\
\rowcolor{blue!20}	Github (K)						& 120.8K	& 879.7K	& 0.88 		(0.70) & 1.12		(1.41) 	& 1.00 	& 0.00 	& 0.00 \\
\rowcolor{blue!20}	Bookcrossing (K)		& 185.9K	& 867.2K	& 0.52 		(0.34) & 1.90		(2.87) 	& 1.00 	& 0.00 	& 0.00 \\
\rowcolor{gray!20}	loc-Gowalla (S)					& 196.5K	& 1.9M	& 1.14 		(0.80) & 0.87		(1.24) 	& 0.02 	& 0.00 	& 0.98 \\
\rowcolor{gray!20}	email-EuAll (S)					& 224.8K	& 679.8K	& 0.16 		(0.06) & 6.19		(14.4) 	& 0.00 	& 0.00 	& 1.00 \\
\rowcolor{gray!20}	web-Stanford (S)				& 255.2K	& 3.8M	& 2.52 		(0.37) & 0.39		(2.68) 	& 0.00 	& 0.00 	& 1.00 \\
\rowcolor{blue!20}	amazon0302 (S)					& 262.1K	& 1.7M	& 2.61 		(0.44) & 0.38		(2.23) 	& 1.00 	& 0.00 	& 0.00 \\
\rowcolor{blue!20}	com-DBLP (S)					& 317.0K	& 2.0M	& 1.43 		(0.70) & 0.69		(1.42) 	& 1.00 	& 0.00 	& 0.00 \\
\rowcolor{gray!20}	web-NotreDame (S)				& 325.7K	& 2.1M	& 2.63 		(0.60) & 0.37		(1.65) 	& 0.00 	& 0.00 	& 1.00 \\
\rowcolor{blue!20}	com-amazon (S)					& 334.8K	& 1.8M	& 1.72 		(0.32) & 0.57		(3.03) 	& 0.98 	& 0.00 	& 0.02 \\
\rowcolor{gray!20}	amazon0312 (S)					& 400.7K	& 4.6M	& 2.18 		(0.41) & 0.45		(2.41) 	& 0.00 	& 0.00 	& 1.00 \\
\rowcolor{gray!20}	amazon0601 (S)					& 403.3K	& 4.8M	& 2.08 		(0.40) & 0.47		(2.44) 	& 0.00 	& 0.00 	& 1.00 \\
\rowcolor{gray!20}	amazon0505 (S)					& 410.2K	& 4.8M	& 2.01 		(0.41) & 0.49		(2.40) 	& 0.00 	& 0.00 	& 1.00 \\
\rowcolor{gray!20}	web-BerkStan (S)				& 654.7K	& 13.1M	& 1.60 		(0.43) & 0.62		(2.30) 	& 0.00 	& 0.00 	& 1.00 \\
\rowcolor{gray!20}	web-Google (S)					& 855.8K	& 8.5M	& 1.77 		(0.42) & 0.56		(2.33) 	& 0.00 	& 0.00 	& 1.00 \\
\rowcolor{gray!20}	roadNet-PA (S)					& 1.0M	& 3.0M	& 7.35 		(1.01) & 0.13		(0.98) 	& 0.00 	& 0.00 	& 1.00
\label{tab:datasets}\\
\end{longtable}
}

As can be observed in Figure~\ref{fig:experimental-threshold},
there exists an \emph{empirical threshold}~$\sigma = \frac{1}{2}$
which is different from the theoretical one.
In fact, in $81\%$ of the datasets with a \emph{dominance} above the threshold
the \choice\ converges to an \emph{\consensus}
while in $86\%$ of the datasets with a \emph{dominance} below the threshold,
the \choice\ ends up in a \emph{metastable} phase.
The \emp threshold is greater than the theoretical threshold because of several factors:
(i) in the experiments the core actually changes color to a small extent (while in the
theoretical part we ignored such small \emph{perturbations}),
and it consequently lowers the probability for an agent in the periphery
to pick the core's color;
(ii) the real-world network we used in the experiments do not have the regularity assumptions of the networks that we consider in the analysis;
(iii) in the experiments we declare metastability only after $|V|$ iterations and this increases the likelihood of false positive metastable runs.
The gap between the theoretical and the empirical threshold should be closed in
future work by providing a more fine-grained theoretical analysis which does not
assume the adversary's color to be monochromatic and considers more general networks.

We want to highlight that the protocol's convergence to the core's color (as shown in
Table~\ref{tab:datasets}) is remarkable in light of the fact that 
the \emph{densest-core} method ensures that the sum of the agents' degrees 
of the core and of the periphery are equal.
More precisely, note that equal volumes of core and periphery,
starting from an initial configuration where two sets support different colors,
is sufficient in the Voter Model to say that the two initial colors have 
the same probability to be the one eventually supported by all 
\nodes{}~\cite{hassin_distributed_2001}, regardless of the topological
structure. 
Previous works on the \choice{}~\cite{cooper_power_2014} provided convergence
results which are parametrized only in the difference of the volumes of the two
sets, suggesting a similar behavior. Our experimental results highlight the
insufficiency of the initial volume distribution as an accurate predictive
parameter, showing that the topological structure of the core plays a decisive
role.

\section{Conclusions}
We analyzed the \choice\ on a class of networks with core-periphery structure,
where the \emph{core}, a small group of densely interconnected agents,
initially holds a different opinion 
from the rest of the network, the \emph{periphery}.
We formally proved that a \emph{phase-transition} phenomenon occurs:
Depending on the dominance parameter $c_d$ characterizing the connectivity of the network,
either the core's opinion spreads among the agents of the periphery 
and the network reaches an \emph{(almost-)consensus},
or there is a \emph{metastability} phase 
in which none of the opinions prevails over the other.

We validated our theoretical results on several real-world networks.
Introducing an efficient and effective method to extract the core,
we showed that the same parameter $c_d$ is sufficient to predict 
the convergence/metastability of the \choice\ most of the time.
Surprisingly, even if the volumes of core and periphery are equal, 
the core's opinion wins in most of the cases.
These behaviors suggest that in many real-world networks 
there actually is a core whose initial opinion
has a great advantage of spreading in simple opinion dynamics such as the 2-Choices.
We think that these results are a relevant step towards understanding which dynamical properties are implicitly responsible for causing social and economic agents to form networks with a core-periphery structure.

\bibliographystyle{alpha}
\bibliography{biblio}

\newcommand{\etalchar}[1]{$^{#1}$}
\begin{thebibliography}{BGKMT16}

\bibitem[ABC{\etalchar{+}}20]{ijcai2020}
Aris Anagnostopoulos, Luca Becchetti, Emilio Cruciani, Francesco Pasquale, and
  Sara Rizzo.
\newblock Biased opinion dynamics: When the devil is in the details.
\newblock In Christian Bessiere, editor, {\em Proceedings of the Twenty-Ninth
  International Joint Conference on Artificial Intelligence, {IJCAI} 2020},
  pages 53--59. ijcai.org, 2020.

\bibitem[ABLP17]{avin_distributed_2014}
Chen Avin, Michael Borokhovich, Zvi Lotker, and David Peleg.
\newblock Distributed computing on core-periphery networks: Axiom-based design.
\newblock {\em J. Parallel Distrib. Comput.}, 99:51--67, 2017.

\bibitem[AF02]{aldous_reversible_2002}
David Aldous and James~Allen Fill.
\newblock Reversible markov chains and random walks on graphs, 2002.
\newblock Unfinished monograph, recompiled 2014, available at
  \url{https://www.stat.berkeley.edu/users/aldous/RWG/book.html}.

\bibitem[ALP{\etalchar{+}}14]{avin_core-periphery_2014}
Chen Avin, Zvi Lotker, David Peleg, Yvonne~Anne Pignolet, and Itzik Turkel.
\newblock Core-{Periphery} in {Networks}: {An} {Axiomatic} {Approach}.
\newblock {\em arXiv:1411.2242 [physics]}, November 2014.
\newblock arXiv: 1411.2242.

\bibitem[BCN{\etalchar{+}}15]{becchetti_plurality_2015}
Luca Becchetti, Andrea Clementi, Emanuele Natale, Francesco Pasquale, and
  Riccardo Silvestri.
\newblock Plurality consensus in the gossip model.
\newblock In {\em Proceedings of the 26th Annual {ACM}-{SIAM} Symposium on
  Discrete Algorithms ({SODA}'15)}, pages 371--390. {SIAM}, 2015.

\bibitem[BCN{\etalchar{+}}16]{becchetti_stabilizing_2016}
Luca Becchetti, Andrea E.~F. Clementi, Emanuele Natale, Francesco Pasquale, and
  Luca Trevisan.
\newblock Stabilizing consensus with many opinions.
\newblock In {\em Proceedings of the Twenty-Seventh Annual {ACM-SIAM} Symposium
  on Discrete Algorithms, {SODA} 2016, Arlington, VA, USA, January 10-12,
  2016}, pages 620--635, 2016.

\bibitem[BCN{\etalchar{+}}17]{becchetti_simple_2014}
Luca Becchetti, Andrea E.~F. Clementi, Emanuele Natale, Francesco Pasquale,
  Riccardo Silvestri, and Luca Trevisan.
\newblock Simple dynamics for plurality consensus.
\newblock {\em Distributed Computing}, 30(4):293--306, 2017.

\bibitem[BCN20]{becchetti2020overview}
Luca Becchetti, Andrea Clementi, and Emanuele Natale.
\newblock Consensus dynamics: An overview.
\newblock {\em SIGACT News}, 51(1):58–104, March 2020.

\bibitem[BCO{\etalchar{+}}16]{benjamini_convergence_2016}
Itai Benjamini, Siu-On Chan, Ryan O’Donnell, Omer Tamuz, and Li-Yang Tan.
\newblock Convergence, unanimity and disagreement in majority dynamics on
  unimodular graphs and random graphs.
\newblock {\em Stochastic Processes and their Applications}, 126(9):2719 --
  2733, 2016.

\bibitem[BE00]{borgatti_models_2000}
Stephen~P Borgatti and Martin~G Everett.
\newblock Models of core/periphery structures.
\newblock {\em Social Networks}, 21(4):375 -- 395, 2000.

\bibitem[Ber01]{berger_dynamic_2001}
Eli Berger.
\newblock Dynamic monopolies of constant size.
\newblock {\em Journal of Combinatorial Theory, Series B}, 83(2):191 -- 200,
  2001.

\bibitem[BFGK16]{george_giakkoupis_efficient_2016}
Petra Berenbrink, Tom Friedetzky, George Giakkoupis, and Peter Kling.
\newblock {Efficient plurality consensus, or: The benefits of cleaning up from
  time to time}.
\newblock In {\em {43rd International Colloquium on Automata, Languages and
  Programming (ICALP 2016)}}, Rome, Italy, July 2016.

\bibitem[BGKMT16]{berenbrink_bounds_2016}
Petra Berenbrink, George Giakkoupis, Anne-Marie Kermarrec, and Frederik
  Mallmann-Trenn.
\newblock {Bounds on the Voter Model in Dynamic Networks}.
\newblock In Ioannis Chatzigiannakis, Michael Mitzenmacher, Yuval Rabani, and
  Davide Sangiorgi, editors, {\em 43rd International Colloquium on Automata,
  Languages, and Programming (ICALP 2016)}, volume~55 of {\em Leibniz
  International Proceedings in Informatics (LIPIcs)}, pages 146:1--146:15,
  Dagstuhl, Germany, 2016. Schloss Dagstuhl--Leibniz-Zentrum fuer Informatik.

\bibitem[CDFR16]{cooper_discordant_2016}
Colin Cooper, Martin~E. Dyer, Alan~M. Frieze, and Nicolas Rivera.
\newblock Discordant voting processes on finite graphs.
\newblock In {\em 43rd International Colloquium on Automata, Languages, and
  Programming, {ICALP} 2016, July 11-15, 2016, Rome, Italy}, pages
  145:1--145:13, 2016.

\bibitem[CER14]{cooper_power_2014}
Colin Cooper, Robert Els{\"{a}}sser, and Tomasz Radzik.
\newblock The power of two choices in distributed voting.
\newblock In {\em Automata, Languages, and Programming - 41st International
  Colloquium, {ICALP} 2014, Copenhagen, Denmark, July 8-11, 2014, Proceedings,
  Part {II}}, pages 435--446, 2014.

\bibitem[CER{\etalchar{+}}15]{cooper_fast_2015}
Colin Cooper, Robert Els{\"{a}}sser, Tomasz Radzik, Nicolas Rivera, and
  Takeharu Shiraga.
\newblock Fast consensus for voting on general expander graphs.
\newblock In {\em Distributed Computing - 29th International Symposium, {DISC}
  2015, Tokyo, Japan, October 7-9, 2015, Proceedings}, pages 248--262, 2015.

\bibitem[Cha00]{charikar2000greedy}
Moses Charikar.
\newblock {\em Greedy Approximation Algorithms for Finding Dense Components in
  a Graph}, pages 84--95.
\newblock Springer Berlin Heidelberg, Berlin, Heidelberg, 2000.

\bibitem[CMQR21]{icdcn2021}
Emilio Cruciani, Hlafo~Alfie Mimun, Matteo Quattropani, and Sara Rizzo.
\newblock Phase transitions of the k-majority dynamics in a biased
  communication model.
\newblock In {\em 22nd International Conference on Distributed Computing and
  Networking, {ICDCN} 2021, Jan 5th-8th, 2021 , Virtual Conference}, page To
  appear, 2021.

\bibitem[CNS19]{CrucianiNS19}
Emilio Cruciani, Emanuele Natale, and Giacomo Scornavacca.
\newblock Distributed community detection via metastability of the 2-choices
  dynamics.
\newblock In {\em The Thirty-Third {AAAI} Conference on Artificial
  Intelligence, {AAAI} 2019, The Thirty-First Innovative Applications of
  Artificial Intelligence Conference, {IAAI} 2019, The Ninth {AAAI} Symposium
  on Educational Advances in Artificial Intelligence, {EAAI} 2019, Honolulu,
  Hawaii, USA, January 27 - February 1, 2019}, pages 6046--6053. {AAAI} Press,
  2019.

\bibitem[CR16]{colin_cooper_linear_2016}
Colin Cooper and Nicolas Rivera.
\newblock The linear voting model.
\newblock In {\em 43rd International Colloquium on Automata, Languages, and
  Programming, {ICALP} 2016, July 11-15, 2016, Rome, Italy}, pages
  144:1--144:12, 2016.

\bibitem[CRR16]{cooper_coalescing-branching_2016}
Colin Cooper, Tomasz Radzik, and Nicolas Rivera.
\newblock The coalescing-branching random walk on expanders and the dual
  epidemic process.
\newblock In {\em Proceedings of the 2016 {ACM} Symposium on Principles of
  Distributed Computing, {PODC} 2016, Chicago, IL, USA, July 25-28, 2016},
  pages 461--467, 2016.

\bibitem[CRRS17]{cooper_fast_2016}
Colin Cooper, Tomasz Radzik, Nicolas Rivera, and Takeharu Shiraga.
\newblock Fast plurality consensus in regular expanders.
\newblock In {\em 31st International Symposium on Distributed Computing, {DISC}
  2017, October 16-20, 2017, Vienna, Austria}, pages 13:1--13:16, 2017.

\bibitem[Deg74]{degroot_reaching_1974}
Morris~H. Degroot.
\newblock Reaching a consensus.
\newblock {\em Journal of the American Statistical Association},
  69(345):118--121, 1974.

\bibitem[DGM{\etalchar{+}}11]{doerr_stabilizing_2011}
Benjamin Doerr, Leslie~Ann Goldberg, Lorenz Minder, Thomas Sauerwald, and
  Christian Scheideler.
\newblock Stabilizing consensus with the power of two choices.
\newblock In {\em Proceedings of the Twenty-third Annual ACM Symposium on
  Parallelism in Algorithms and Architectures}, SPAA '11, pages 149--158, New
  York, NY, USA, 2011. ACM.

\bibitem[DJ10]{easley_networks_2010}
Easley David and Kleinberg Jon.
\newblock {\em Networks, Crowds, and Markets: Reasoning About a Highly
  Connected World}.
\newblock Cambridge University Press, New York, USA, 2010.

\bibitem[DP09]{dubhashi_concentration_2009}
Devdatt Dubhashi and Alessandro Panconesi.
\newblock {\em Concentration of Measure for the Analysis of Randomized
  Algorithms}.
\newblock Cambridge University Press, New York, NY, USA, 1st edition, 2009.

\bibitem[EJN02]{newman_spread_2002}
Mark E.~J.~Newman.
\newblock Spread of epidemic disease on networks.
\newblock {\em Physical review. E, Statistical, nonlinear, and soft matter
  physics}, 66:016128, 08 2002.

\bibitem[Fre05]{french_formal_1956}
John~R. French.
\newblock A formal theory of social power.
\newblock {\em Psychological Review}, 63(3):181--194, 1956-05.

\bibitem[GJ10]{golub_naive_2010}
Benjamin Golub and Matthew~O. Jackson.
\newblock Naïve learning in social networks and the wisdom of crowds.
\newblock {\em American Economic Journal: Microeconomics}, 2(1):112--49,
  February 2010.

\bibitem[GL17]{ghaffari_tight_2017}
Mohsen Ghaffari and Johannes Lengler.
\newblock Tight analysis for the 3-majority consensus dynamics.
\newblock {\em CoRR}, abs/1705.05583, 2017.

\bibitem[GP16]{ghaffari_polylogarithmic_2016}
Mohsen Ghaffari and Merav Parter.
\newblock A polylogarithmic gossip algorithm for plurality consensus.
\newblock In {\em Proceedings of the 2016 {ACM} Symposium on Principles of
  Distributed Computing, {PODC} 2016, Chicago, IL, USA, July 25-28, 2016},
  pages 117--126, 2016.

\bibitem[Har59]{harary_criterion_1959}
Frank Harary.
\newblock A criterion for unanimity in french's theory of social power.
\newblock {\em Studies in Social Power}, pages 168--182, 1959.

\bibitem[Het00]{hethcote_mathematics_2000}
Herbert~W. Hethcote.
\newblock The mathematics of infectious diseases.
\newblock {\em SIAM Review}, 42(4):599--653, 2000.

\bibitem[HLW06]{hoory_expander_2006}
Shlomo Hoory, Nathan Linial, and Avi Wigderson.
\newblock Expander graphs and their applications.
\newblock {\em BULL. AMER. MATH. SOC.}, 43(4):439--561, 2006.

\bibitem[HP01]{hassin_distributed_2001}
Yehuda Hassin and David Peleg.
\newblock Distributed probabilistic polling and applications to proportionate
  agreement.
\newblock {\em Information and Computation}, 171(2):248 -- 268, 2001.

\bibitem[Jac21]{jackson_social_2010}
Matthew~O. Jackson.
\newblock {\em Social and Economic Networks}.
\newblock Princeton Univers. Press, 2010-11-21.

\bibitem[Kun13]{kunegis2013konect}
J{\'e}r{\^o}me Kunegis.
\newblock Konect: the koblenz network collection.
\newblock In {\em Proceedings of the 22nd International Conference on World
  Wide Web}, pages 1343--1350. ACM, 2013.

\bibitem[LAH07]{lambiotte2007majority}
R.~Lambiotte, M.~Ausloos, and J.~A. Ho\l{}yst.
\newblock Majority model on a network with communities.
\newblock {\em Phys. Rev. E}, 75:030101, Mar 2007.

\bibitem[Lig06]{liggett_interacting_2012}
Thomas~M. Liggett.
\newblock {\em Interacting Particle Systems}.
\newblock Springer Science \& Business Media, 2012-12-06.
\newblock Google-Books-{ID}: 7JbqBwAAQBAJ.

\bibitem[Lig09]{liggett_stochastic_2013}
Thomas~M. Liggett.
\newblock {\em Stochastic Interacting Systems: Contact, Voter and Exclusion
  Processes}.
\newblock Springer Science \& Business Media, 2013-03-09.
\newblock Google-Books-{ID}: {wRv}2CAAAQBAJ.

\bibitem[LK14]{snapnets}
Jure Leskovec and Andrej Krevl.
\newblock {SNAP Datasets}: {Stanford} large network dataset collection.
\newblock \url{http://snap.stanford.edu/data}, June 2014.

\bibitem[LP17]{levin2017markov}
David~A Levin and Yuval Peres.
\newblock {\em Markov chains and mixing times}, volume 107.
\newblock American Mathematical Society, 2017.

\bibitem[MNT14]{mossel_majority_2014}
Elchanan Mossel, Joe Neeman, and Omer Tamuz.
\newblock Majority dynamics and aggregation of information in social networks.
\newblock {\em Autonomous Agents and Multi-Agent Systems}, 28(3):408--429, May
  2014.

\bibitem[MT17]{mossel_opinion_2017}
Elchanan Mossel and Omer Tamuz.
\newblock Opinion exchange dynamics.
\newblock {\em Probab. Surveys}, 14:155--204, 2017.

\bibitem[Pel98]{peleg_size_1998}
David Peleg.
\newblock Size bounds for dynamic monopolies.
\newblock {\em Discrete Applied Mathematics}, 86(2-3):263--273, 1998.

\bibitem[Pel02]{peleg_local_2002}
David Peleg.
\newblock Local majorities, coalitions and monopolies in graphs: a review.
\newblock {\em Theor. Comput. Sci.}, 282(2):231--257, 2002.

\bibitem[RPFM14]{rombach_core-periphery_2014}
M.~Puck Rombach, Mason~A. Porter, James~H. Fowler, and Peter~J. Mucha.
\newblock Core-periphery structure in networks.
\newblock {\em SIAM Journal on Applied Mathematics}, 74(1):167--190, 2014.

\bibitem[SK79]{snyder_structural_1979}
David Snyder and Edward~L. Kick.
\newblock Structural position in the world system and economic growth,
  1955-1970: A multiple-network analysis of transnational interactions.
\newblock {\em American Journal of Sociology}, 84(5):1096--1126, 1979.

\bibitem[SS19]{shimizu19phase}
Nobutaka Shimizu and Takeharu Shiraga.
\newblock Phase transitions of best-of-two and best-of-three on stochastic
  block models.
\newblock In Jukka Suomela, editor, {\em 33rd International Symposium on
  Distributed Computing, {DISC} 2019, October 14-18, 2019, Budapest, Hungary},
  volume 146 of {\em LIPIcs}, pages 32:1--32:17. Schloss Dagstuhl -
  Leibniz-Zentrum f{\"{u}}r Informatik, 2019.

\bibitem[ZM04]{zhou2004rich}
Shi Zhou and Ra{\'u}l~J Mondrag{\'o}n.
\newblock The rich-club phenomenon in the internet topology.
\newblock {\em IEEE Communications Letters}, 8(3):180--182, 2004.

\bibitem[ZMN15]{zhang_identification_2015}
Xiao Zhang, Travis Martin, and M.~E.~J. Newman.
\newblock Identification of core-periphery structure in networks.
\newblock {\em Phys. Rev. E}, 91:032803, Mar 2015.

\end{thebibliography}

\end{document}